\documentclass[lettersize,journal]{IEEEtran}

\usepackage[T1]{fontenc} %

\usepackage{textcomp}       %
\usepackage[utf8]{inputenc} %

\usepackage{stfloats}

\usepackage{amsthm}
\usepackage{mathtools}
\usepackage{amssymb}
\usepackage{amsfonts}
\usepackage{xfrac}
\usepackage{bm}

\newtheorem{Proposition}{Proposition}

\DeclareMathOperator*{\argmax}{argmax} %
\DeclareMathOperator*{\argmin}{argmin}

\DeclareMathOperator*{\HD}{HD}
\DeclareMathOperator*{\sgn}{sgn}

\newcommand{\bd}{\bm{d}}

\newcommand{\bv}{\bm{v}}

\newcommand{\psum}{\bm{\beta}}
\newcommand{\llr}{\bm{\lambda}}

\newcommand{\PM}{\mathrm{PM}}

\newcommand{\EV}{\mathrm{e}}
\newcommand{\OD}{\mathrm{o}}

\newcommand{\calN}{\mathcal{N}}

\newcommand{\bbA}{\mathbb{A}}
\newcommand{\bbF}{\mathbb{F}}
\newcommand{\bbS}{\mathbb{S}}

\usepackage{lipsum} %

\usepackage[usenames,dvipsnames]{xcolor} %

\usepackage{tikz}

\usepackage{pgfplots} %
\usepackage{multicol} %
\usepackage{graphicx} %
\usepackage{float}    %

\usepackage{subcaption} %

\usepackage{array}
\usepackage{booktabs}  %
\usepackage{multirow}
\usepackage{makecell}
\usepackage{boldline}
\usepackage{threeparttable}
\usepackage{pbox}

\usepackage{url}
\usepackage{verbatim}

\usepackage{algorithmic}
\usepackage[linesnumbered, lined, ruled, vlined]{algorithm2e}

\usepackage{cite}
\hyphenation{op-tical net-works semi-conduc-tor algo-rithm IEEE-Xplore polar decoder following}

\usepackage[super]{nth}
\usepackage[detect-weight=true]{siunitx}[=v2]

\usepackage[shortcuts]{extdash} %

\definecolor{bostonunired}{HTML}{CC0000}    %
\definecolor{fireenginered}{HTML}{CF202A}   %
\definecolor{vermillionred}{HTML}{E34234}   %

\definecolor{titlered}{RGB}{212,0,0}

\definecolor{reddeep}{RGB}{178,24,43}
\definecolor{redlight}{RGB}{252,78,42}
\definecolor{redlight2}{RGB}{255,114,111}

\definecolor{redbright}{RGB}{255,24,43}

\definecolor{perfectgreen}{HTML}{4FBF26}    %
\definecolor{maygreen}{HTML}{4E9B47}        %
\definecolor{mattelime}{HTML}{75AD50}       %

\definecolor{indigorainbow}{HTML}{1D3F6E}   %
\definecolor{royalazure}{HTML}{1866E1}      %
\definecolor{azure}{HTML}{008AFF}           %
\definecolor{blueberry}{HTML}{4F86F7}       %
\definecolor{fadednavy}{HTML}{242F78}       %
\definecolor{navy}{HTML}{000080}            %

\definecolor{goldwebgolden}{HTML}{FFD700}   %
\definecolor{radioactive}{HTML}{FAE500}     %
\definecolor{gold}{HTML}{FFD700}            %

\definecolor{vividgamboge}{HTML}{FF9900} %

\definecolor{shinygray}{HTML}{C7C6C6}

\definecolor{slategrey}{HTML}{708090}   %
\definecolor{neutralgray}{HTML}{828382} %

\definecolor{mattecharcoal}{HTML}{3B4248} %

\definecolor{graydark}{HTML}{6B6E70} %

\definecolor{charcoal}{HTML}{36454F} %
\definecolor{charcoalShade1}{HTML}{2B373F}
\definecolor{charcoalShade2}{HTML}{263037}
\definecolor{charcoalShade3}{HTML}{20292F}

\definecolor{sunset1}{HTML}{F3E79B}
\definecolor{sunset2}{HTML}{FAC484}
\definecolor{sunset3}{HTML}{F8A07E}
\definecolor{sunset4}{HTML}{EB7F86}
\definecolor{sunset5}{HTML}{CE6693}
\definecolor{sunset6}{HTML}{A059A0}
\definecolor{sunset7}{HTML}{5C53A5}

\definecolor{bluyi1}{HTML}{F7FEAE}
\definecolor{bluyi2}{HTML}{B7E6A5}
\definecolor{bluyi3}{HTML}{7CCBA2}
\definecolor{bluyi4}{HTML}{46AEA0}
\definecolor{bluyi5}{HTML}{089099}
\definecolor{bluyi6}{HTML}{00718B}
\definecolor{bluyi7}{HTML}{045275}

\definecolor{geyser1}{HTML}{008080}
\definecolor{geyser2}{HTML}{70A494}
\definecolor{geyser3}{HTML}{B4C8A8}
\definecolor{geyser4}{HTML}{F6EDBD}
\definecolor{geyser5}{HTML}{EDBB8A}
\definecolor{geyser6}{HTML}{DE8A5A}
\definecolor{geyser7}{HTML}{CA562C}

\definecolor{temps1}{HTML}{009392}
\definecolor{temps2}{HTML}{39B185}
\definecolor{temps3}{HTML}{9CCb86}
\definecolor{temps4}{HTML}{E9E29C}
\definecolor{temps5}{HTML}{EEB479}
\definecolor{temps6}{HTML}{E88471}
\definecolor{temps7}{HTML}{CF597E}

\definecolor{earth1}{HTML}{A16928}
\definecolor{earth2}{HTML}{BD925A}
\definecolor{earth3}{HTML}{D6BD8D}
\definecolor{earth4}{HTML}{EDEAC2}
\definecolor{earth5}{HTML}{B5C8B8}
\definecolor{earth6}{HTML}{79A7AC}
\definecolor{earth7}{HTML}{2887A1}

\definecolor{bold1}{HTML}{7F3C8D}
\definecolor{bold2}{HTML}{11A579}
\definecolor{bold3}{HTML}{3969AC}
\definecolor{bold4}{HTML}{F2B701}
\definecolor{bold5}{HTML}{E73F74}
\definecolor{bold6}{HTML}{80BA5A}
\definecolor{bold7}{HTML}{E68310}
\definecolor{bold8}{HTML}{008695}
\definecolor{bold9}{HTML}{CF1C90}
\definecolor{bold10}{HTML}{F97B72}
\definecolor{bold11}{HTML}{4B4B8F}
\definecolor{bold12}{HTML}{A5AA99}

\definecolor{pastel1}{HTML}{66C5CC}
\definecolor{pastel2}{HTML}{F6CF71}
\definecolor{pastel3}{HTML}{F89C74}
\definecolor{pastel4}{HTML}{DCB0F2}
\definecolor{pastel5}{HTML}{87C55F}
\definecolor{pastel6}{HTML}{9EB9F3}
\definecolor{pastel7}{HTML}{FE88B1}
\definecolor{pastel8}{HTML}{C9DB74}
\definecolor{pastel9}{HTML}{8BE0A4}
\definecolor{pastel10}{HTML}{B497E7}
\definecolor{pastel11}{HTML}{D3B484}
\definecolor{pastel12}{HTML}{B3B3B3}

\definecolor{prism1}{HTML}{5F4690}
\definecolor{prism2}{HTML}{1D6996}
\definecolor{prism3}{HTML}{38A6A5}
\definecolor{prism4}{HTML}{0F8554}
\definecolor{prism5}{HTML}{73AF48}
\definecolor{prism6}{HTML}{EDAD08}
\definecolor{prism7}{HTML}{E17C05}
\definecolor{prism8}{HTML}{CC503E}
\definecolor{prism9}{HTML}{94346E}
\definecolor{prism10}{HTML}{6F4070}
\definecolor{prism11}{HTML}{994E95}
\definecolor{prism12}{HTML}{666666}

\definecolor{vivid1}{HTML}{E58606}
\definecolor{vivid2}{HTML}{5D69B1}
\definecolor{vivid3}{HTML}{52BCA3}
\definecolor{vivid4}{HTML}{99C945}
\definecolor{vivid5}{HTML}{CC61B0}
\definecolor{vivid6}{HTML}{24796C}
\definecolor{vivid7}{HTML}{DAA51B}
\definecolor{vivid8}{HTML}{2F8AC4}
\definecolor{vivid9}{HTML}{764E9F}
\definecolor{vivid10}{HTML}{ED645A}
\definecolor{vivid11}{HTML}{CC3A8E}
\definecolor{vivid12}{HTML}{A5AA99}

\definecolor{cBrewerPaired1}{HTML}{A6CEE3}
\definecolor{cBrewerPaired2}{HTML}{1F78B4}
\definecolor{cBrewerPaired3}{HTML}{B2DF8A}
\definecolor{cBrewerPaired4}{HTML}{33A02C}
\definecolor{cBrewerPaired5}{HTML}{FB9A99}
\definecolor{cBrewerPaired6}{HTML}{E31A1C}
\definecolor{cBrewerPaired7}{HTML}{FDBF6F}
\definecolor{cBrewerPaired8}{HTML}{FF7F00}
\definecolor{cBrewerPaired9}{HTML}{CAB2D6}
\definecolor{cBrewerPaired10}{HTML}{6A3D9A}
\definecolor{cBrewerPaired11}{HTML}{FFFF99}
\definecolor{cBrewerPaired12}{HTML}{B15928}

\definecolor{cBrewerQualPrint1}{HTML}{E41A1C}
\definecolor{cBrewerQualPrint2}{HTML}{377EB8}
\definecolor{cBrewerQualPrint3}{HTML}{4DAF4A}
\definecolor{cBrewerQualPrint4}{HTML}{984EA3}
\definecolor{cBrewerQualPrint5}{HTML}{FF7F00}
\definecolor{cBrewerQualPrint6}{HTML}{FFFF33}
\definecolor{cBrewerQualPrint7}{HTML}{A65628}
\definecolor{cBrewerQualPrint8}{HTML}{F781BF}
\definecolor{cBrewerQualPrint9}{HTML}{999999}

\definecolor{thsViolet1}{HTML}{E4C7F1} %
\definecolor{thsViolet2}{HTML}{9F82CE} %
\definecolor{thsViolet3}{HTML}{4B4B8F} %

\definecolor{thsPink1}{HTML}{DCB0F2} %
\definecolor{thsPink2}{HTML}{FE88B1} %
\definecolor{thsPink3}{HTML}{CC3A8E} %

\definecolor{thsBlue1}{HTML}{4F86F7} %
\definecolor{thsBlue2}{HTML}{0F52BA} %
\definecolor{thsBlue3}{HTML}{332288} %

\definecolor{thsCyan1}{HTML}{88CCEE} %
\definecolor{thsCyan2}{HTML}{66C5CC} %
\definecolor{thsCyan3}{HTML}{367588} %

\definecolor{thsGreen1}{HTML}{4FBF26} %
\definecolor{thsGreen2}{HTML}{11A579} %
\definecolor{thsGreen3}{HTML}{0F8554} %

\definecolor{thsYellow1}{HTML}{FFFF66} %
\definecolor{thsYellow2}{HTML}{FAE500} %
\definecolor{thsYellow2}{HTML}{FFFF00}

\definecolor{thsOrange1}{HTML}{ECDA9A} %
\definecolor{thsOrange2}{HTML}{F7945D} %
\definecolor{thsOrange3}{HTML}{E58606} %

\definecolor{thsRed1}{HTML}{CC503E} %
\definecolor{thsRed2}{HTML}{CC0000} %
\definecolor{thsRed3}{HTML}{670E10} %

\definecolor{thsGray1}{HTML}{B1B6B7} %
\definecolor{thsGray2}{HTML}{828382} %
\definecolor{thsGray3}{HTML}{36454F} %

\def\RZeroColor{white}
\def\ROneColor{black}
\def\RepColor{gold}
\def\SpcColor{perfectgreen}

\def\SRColor{cyan!70!white}
\def\SourceColor{redlight2}

\newcommand{\redT}[1]{\textcolor{vermillionred}{#1}}

\newcommand{\blueT}[1]{\textcolor{blue}{#1}}
\newcommand{\greenT}[1]{\textcolor{\SpcColor}{#1}}

\IEEEoverridecommandlockouts

\newcommand{\tikzcircle}[2][black,fill=black]{\protect\tikz[baseline=-0.8ex]\protect\draw[#1,radius=#2, very thick] (0,0) circle ;}%
\newcommand{\tikzcirclewidth}{0.15}

\newcommand{\figwidth}{0.95} %

\newcommand{\mc}[3]{\multicolumn{#1}{#2}{#3}}

\newcommand{\pf}[1]{\textcolor{blue}{#1}} %
\renewcommand{\pf}[1]{#1} %

\begin{document}
\title{A Sequence Repetition Node-Based Successive Cancellation List Decoder for 5G Polar Codes: Algorithm and Implementation}

\author{%
  Yuqing Ren,~\IEEEmembership{Student Member,~IEEE,}
  Andreas Toftegaard Kristensen,~\IEEEmembership{Student Member,~IEEE,}
  Yifei Shen,\\~\IEEEmembership{Student Member,~IEEE,}
  Alexios Balatsoukas-Stimming,~\IEEEmembership{Member,~IEEE,}
  Chuan Zhang,~\IEEEmembership{Senior Member,~IEEE,}\\
  Andreas Burg,~\IEEEmembership{Senior Member,~IEEE}
  \vspace{-7mm}
  \thanks{Y. Ren, A. T. Kristensen, Y. Shen, and A. Burg are with the       Telecommunications Circuits Laboratory (TCL), \'{E}cole Polytechnique F\'{e}d\'{e}rale de Lausanne (EPFL), Lausanne 1015, Switzerland (email: \{yuqing.ren, andreas.kristensen, yifei.shen, andreas.burg\}@epfl.ch). \emph{Corresponding author: Andreas Burg}.}
  \thanks{Y. Shen, and C. Zhang are with the LEADS of Southeast University, the National Mobile Communications Research Laboratory, and the Purple Mountain Laboratories, Nanjing 210096, China (email: chzhang@seu.edu.cn).}
  \thanks{A. Balatsoukas-Stimming is with the Department of Electrical Engineering, Eindhoven University of Technology, 5600 MB Eindhoven, The Netherlands (email: a.k.balatsoukas.stimming@tue.nl).}
  \thanks{A. T. Kristensen and Y. Shen contributed equally to this paper.}
}

\maketitle

\begin{abstract}
Due to the low-latency and high-reliability requirements of 5G, low-complexity node-based successive cancellation list (SCL) decoding has received considerable attention for use in 5G communications systems.
By identifying special constituent codes in the decoding tree and immediately decoding these, node-based SCL decoding provides a significant reduction in decoding latency compared to conventional SCL decoding.
However, while there exists many types of nodes, the current node-based SCL decoders are limited by the lack of a more generalized node that can efficiently decode a larger number of different constituent codes to further reduce the decoding time.
In this paper, we extend a recent generalized node, the sequence repetition (SR) node\pf{,} to SCL decoding\pf{,} and describe the first implementation of an SR-List decoder.
By merging certain SR-List decoding operations and applying various optimizations for 5G New Radio (NR) polar codes, our optimized SR-List decoding algorithm increases the throughput by almost $\bm{2\times}$ compared to a similar state-of\=/the-art node-based SCL decoder.
We also present our hardware implementation of the optimized SR-List decoding algorithm which supports all 5G NR polar codes.
Synthesis results show that our SR-List decoder can achieve a \SI{2.94}{Gbps} throughput and \SI{6.70}{Gbps\per\milli\meter\squared} area efficiency for $\bm{L=8}$.
\end{abstract}

\begin{IEEEkeywords}
polar codes, successive cancellation decoding, list decoding, node-based SCL, sequence repetition (SR), hardware implementation, low-latency, 5G NR, wireless communications
\end{IEEEkeywords}

\section{Introduction}

\IEEEPARstart{P}{olar} codes, proposed in Ar{\i}kan's seminal work~\cite{arikan2009channel}, are the first class of error-correcting codes with an explicit construction that provably achieves channel capacity for binary-input discrete memoryless channels.
Based on their outstanding error-correcting performance, polar codes were ratified as the standard code for the control channel of 5G enhanced mobile broadband (eMBB)~\cite{5Gembb}.
While the polar code encoding schemes were released by 3GPP in 2018~\cite{5Gstandard2018}, 3GPP did not specify a decoding scheme for polar codes.
Although the original low-complexity successive cancellation (SC) decoding algorithm allows polar codes to achieve channel capacity at infinite code lengths, SC decoding has mediocre error-correcting performance for moderate code lengths and high decoding latency, which is insufficient for 5G scenarios.

To improve the error-correcting performance, Tal and Vardy proposed SC-List (SCL) decoding~\cite{tal2015list}, which explores both hypotheses for each message bit and maintains a list of up to $L$ candidate codewords in parallel during the decoding.
The 5G standard also includes concatenated cyclic redundancy check (CRC) codes, which allow the SCL decoder to select the most reliable candidate codeword that satisfies the CRC, which further improves the SCL decoding performance~\cite{Niu2012Letter}.
However, SCL decoding with medium to large list sizes (e.g., $L=4$ or $L=8$) suffers from high hardware complexity as an SCL decoder replicates the SC decoder structure $L$ times.
Nonetheless, owing to the hardware-friendly log-likelihood ratio (LLR) based SCL decoding algorithm~\cite{Alex2015LLRbased}, an SCL decoder with $L=8$ provides a good trade-off between error-correcting performance and hardware complexity.
It has therefore been chosen as the error-correcting performance baseline during the 5G standardization process~\cite{5GSCL}.

Driven by the strict low-latency and high throughput requirements of modern communication systems, numerous SCL decoder hardware implementations have been presented in the literature~\cite{yuan2014low, Lin2016TVLSI, Xiong2016TSP, Chui2018hroushold, liu20185, hashemi2016TCAS1,hashemi2017fastflexible, Kim2018TSP, Lee2020TSP, kestel2020506gbit, Alex2015LLRbased, Fan2016JSAC, Xiong2016TVLSI, Lin2015TVLSI, giard2017polarbear, Yuan2017TCASII, tao2020configurable}.
Most of these process more than one bit at a time to accelerate the decoding, since the conventional SCL decoder~\cite{Alex2015LLRbased} suffers from a long latency due to the serial decoding of each bit.
In~\cite{yuan2014low,Lin2016TVLSI, Xiong2016TSP,Chui2018hroushold,liu20185}, multi-bit SCL decoding is used in which a fixed number of bits are processed simultaneously, and all possible candidates for these bits are explored in parallel without exploiting a particular code structure.
Alternatively, due to the recursive construction of polar codes, where every polar code of length $N$ is formed from two constituent codes of length $N/2$, there exist constituent codes which have special bit patterns that allow for a more efficient direct decoding of all their bits in parallel.
Rate-0 (R0) and Rate-1 (R1) nodes were initially proposed in~\cite{alamdar2011simplified} for simplified SC decoding with many other node types later identified, such as repetition (REP) and single parity-check (SPC)~\cite{sarkis2014fast} nodes for Fast-SC decoding, TYPE-I to TYPE-V~\cite{hanif2017fast} nodes, and the generalized REP (G-REP) and generalized parity-check (G-PC) nodes~\cite{condo2018generalized}.
These node-based techniques were later extended with list update techniques for node-based SCL decoding~\cite{sarkis2015fast, hashemi2017fastflexible, hanif2018fast}.

These special nodes often represent large constituent codes, especially for high- and low-rate polar codes, allowing a node-based SCL decoder to \pf{generally} achieve \pf{much} lower latency than multi-bit SCL decoders \pf{at the cost of making the decoding schedule more complex due to the need for detecting these nodes in the decoding tree}.
However, all existing node-based SCL decoders~\cite{hashemi2016TCAS1, hashemi2017fastflexible, Kim2018TSP, Lee2020TSP, kestel2020506gbit} suffer from either only supporting a few special nodes, which impairs their decoding latency, or they instantiate separate sub-decoders for each node type when supporting several of these, which complicates their hardware implementation.
What is lacking is a more general node type implementation that supports most special node patterns with maximum hardware re-use and the ability to optimize the decoder throughput and area for a specific standard.

Recently, the sequence repetition (SR) node was proposed for SC decoding~\cite{zheng2020implementation, zheng2021threshold}.
The SR node is a generalized node that includes most aforementioned nodes as special cases.
An SR-based decoder can thus support many node patterns in both low- and high-rate codes to reduce the decoding effort without limiting the range of supported codes.
SR decoding can also potentially provide low-latency decoding as it is highly parallelizable, making it promising for SCL decoding in 5G.
However, for SCL decoding, with a list of $L$ paths, the strategy from~\cite{zheng2021threshold} of decoding with all the SR node candidates in parallel is highly computationally complex.
Therefore, to make SR-List decoding more feasible in hardware, it is critical to develop a general low-complexity SR-List decoding algorithm and architecture.
This architecture can then be customized to the specific codes employed by a standard to provide the best trade-off between area and decoding latency.

\subsection*{Contributions and Paper Outline:}
This work is an extension of our work in~\cite{shen2022}, in which the general SR-List \pf{decoding} algorithm was initially proposed.
However, due to the high implementation complexity when utilizing the large parallelism of SR-List \pf{decoding}, it is difficult to fully realize the potential of the SR-List \pf{decoding} algorithm in hardware.
In this paper, we thus perform a joint optimization of the SR-List \pf{decoding} algorithm with its hardware implementation.
\pf{To better reach a low-latency implementation, we optimize the SR node structure by dividing it into a low-rate SR-I part combined with a rate-one SR-II part, and perform their maximum likelihood (ML) decoding and Wagner-based list decoding by merging or simplifying internal decoding steps.}
We also analyze the 5G NR polar codes to derive several constraints on the special node types, which helps us to significantly decrease the decoding latency and computational complexity.
Our contributions on the algorithmic level comprise the following:
\begin{itemize}
    \item We propose an optimized SR-List decoding algorithm \pf{which makes} more operations execute in parallel \pf{and} less complex to \pf{implement} in hardware. Additionally, by investigating the node types encountered in the 5G NR polar codes, we can derive several constraints which further help to reduce the SR-List decoding complexity.

    \item For our optimized SR-List decoding, we also propose to combine it with two general latency reduction techniques which simplify the path update and node traversal, respectively. The worst-case latency is significantly reduced when compared to the state-of\=/the-art (SOA) node-based SCL~\cite{hashemi2017fastflexible} with negligible \pf{to no} performance degradation.
\end{itemize}

On the architecture level, \pf{we design the SR-List decoder implementation, which is the first node-based SCL decoder that} is fully compatible with and specifically optimized for all 5G NR polar codes.
Our contributions in this area include:
\begin{itemize}
    \item We propose flexible multi-stage decoding to combine multi-stage decoding~\cite{liu20185} and node-based list decoding~\cite{hashemi2017fastflexible}, which decreases the LLR memory and the LLR calculation latency.

    \item We provide an efficient architecture for SR-List decoding that efficiently re-uses internal modules to improve the hardware efficiency.

    \item Based on the rank\pf{-order} sorter~\cite{Gal2020SiPS}, we propose a \pf{new} partial rank\=/order sorter which has a \SI{45}{\percent} area reduction compared to a full rank\=/order sorter and improves the maximum clock frequency.

    \item By exploring the SR-List decoder design space, we find the best algorithm/architecture configurations that satisfy different requirements for area and throughput. Synthesis results show that our decoder can outperform other \pf{similar} polar decoders in throughput and area efficiency.
\end{itemize}

The remainder of this paper is organized as follows:
Section~\ref{sec:preliminaries} provides symbol definitions and a background on polar codes and decoding.
Section~\ref{sec:proposed_algos} introduces the proposed SR-List decoding algorithm \pf{with both general optimizations and} specifically for 5G NR polar codes.
In Section~\ref{sec:hw_node_based_polar_dec_arch}, we present our SR-List decoder architecture which is compatible with all 5G NR polar codes.
Section~\ref{sec:impl_results} discusses the implementation results and presents our design space exploration.
Finally, Section~\ref{sec:conclusion} concludes the paper.

\section{Preliminaries}\label{sec:preliminaries}

\emph{Notation:} In this paper, we use the following definitions.
Boldface lowercase letters $\bm{u}$ denote vectors, where ${\bm{u}}[i]$ refers to the $i$\=/th element of $\bm{u}$ and $\bm{u}[i{:}j]$ is the sub-vector $(\bm{u}[i],\bm{u}[i+1],\cdots,\bm{u}[j]),i\leq j$ and the null vector otherwise.
Boldface uppercase letters $\mathbf{B}$ represent matrices, with $\mathbf{B}[i][j]$ denoting the element in the $i$\=/th row of the $j$\=/th column.
Note that all indices start from $0$.
Blackboard letters $\bbS$ denote sets with $|\bbS|$ being the cardinality.
The function $\HD(x) := 1_{x < 0}$ defines the hard decision function.
We adopt the following parameters from the 5G NR standard~\cite{5Gstandard2018}:
the length of the mother polar code is denoted as $N=2^n$, $A$ is the number of message bits, $P$ the number of CRC bits, $K$ the number of messages bits with the CRC bits attached (i.e., information bits for polar codes), $E$ the length of the codeword after rate-matching, and $G$ the encoded block length.
The $i$\=/th node at stage $s$ is denoted as $\calN_{s,i}$, the code length of the node as $N_s$ or $2^s$, and the number of information bits in the node is $K_{s,i}$.
The frozen and information bit set indices are denoted as $\bbF$ and $\bbA$, respectively.
The code-rate $R$ is $R=A/E$, and we refer to a code after rate-matching as an $(E,A)$ polar code.
We use uplink (UL) and downlink (DL) codes from the Physical Uplink Control Channel (PUCCH), the Physical Uplink Shared Channel (PUSCH), and the Physical Downlink Control Channel (PDCCH) from 5G NR.
\pf{Further details on 5G NR are given in~\cite{5Gstandard2018} and the 5G NR polar code construction is described in~\cite{bioglio2020design}.}
\pf{A summary of key symbol and function definitions is provided in Table~\ref{tab:symbol_def} of Appendix~\ref{sec:symbol_def}.}

\subsection{Construction and Encoding}

Given an input bit sequence $\bm{u}$ of length $N$ with $A$ information bits, a polar encoder applies a linear transformation $\bm{x}=\bm{u}\mathbf{G}$ to get the codeword $\bm{x}$.
The generator matrix $\mathbf{G}=\mathbf{F}^{\otimes n}$ is constructed from the kernel $\mathbf{F}= \left[\begin{smallmatrix} 1 & 0 \\ 1 & 1\end{smallmatrix}\right]$.
Based on the principle of channel polarization~\cite{arikan2009channel}, the $N$ bits in $\bm{u}$ correspond to $N$ individual bit channels with different reliability, where the $K$ most reliable bit channels transmit information bits with CRC attached and the remaining $N-K$ bit channels transmit frozen bits, typically set to a value 0.
After encoding, rate-matching is applied to the codeword $\bm{x}$ to get a codeword of length $E$ for both UL and DL channels.

\subsection{Successive Cancellation (SC) Decoding}

At the receiver, a channel LLR vector of length $N$ is obtained after rate-recovery and $\bm{u}$ can be estimated by an SC decoder.
SC decoding can be represented as the traversal of a binary tree with $n+1$ stages (including the root node) and $2^{n-s}$ nodes at the $s$\=/th stage as illustrated in Fig.~\ref{fig:graph_dec_tree_ex_s4}.
Each node represents a constituent code of length $2^s$.
For the $i$\=/th node at the $s$\=/th stage, $\calN_{s,i}$ with $i\in [0,...,2^{n-s}-1]$, a length $2^s$ LLR vector $\llr_{s,i}$ is received and after traversing all the child nodes, the node returns a length $2^s$ partial sum (PSUM) vector $\psum_{s,i}$ to its parent node.
The $\llr_{s,i}$ update equations are
\begin{subequations}\label{eq:llr_fg_func}
    \begin{align}
        \llr_{s,2i}[j]   &= f(\llr_{s+1,i}[j],\llr_{s+1,i}[j+2^s]) \,, \label{eq:llr_fg_funcA} \\
        \llr_{s,2i+1}[j] &= g(\llr_{s+1,i}[j],\llr_{s+1,i}[j+2^s],\psum_{s,2i}[j]) \,, \label{eq:llr_fg_funcB}
    \end{align}
\end{subequations}
where $j \in [0, 2^{s}-1]$ indexes the $j$\=/th value of an LLR vector.
The $f$- and $g$-functions are defined as
\begin{subequations}\label{eq:fg_func_def}
    \begin{align}
        f(x,y)  & \approx \sgn(x) \sgn(y) \min\{|x|,|y|\} \,, \label{eq:fg_func_defA} \\
        g(x,y,z) &= (1-2z) x + y  \,. \label{eq:fg_func_defB}
    \end{align}
\end{subequations}
The PSUM vector, $\psum_{s+1,i}$, is updated by
\begin{subequations}\label{eq:psum_func}
    \begin{align}
        \psum_{s+1,i}[j]     &= \psum_{s,2i}[j]\oplus\psum_{s,2i+1}[j] \,, \label{eq:psum_funcA}\\
        \psum_{s+1,i}[j+2^s] &= \psum_{s,2i+1}[j] \,\pf{.} \label{eq:psum_funcB}
    \end{align}
\end{subequations}

\begin{figure}[t]
    \centering
    \includegraphics[width=\columnwidth]{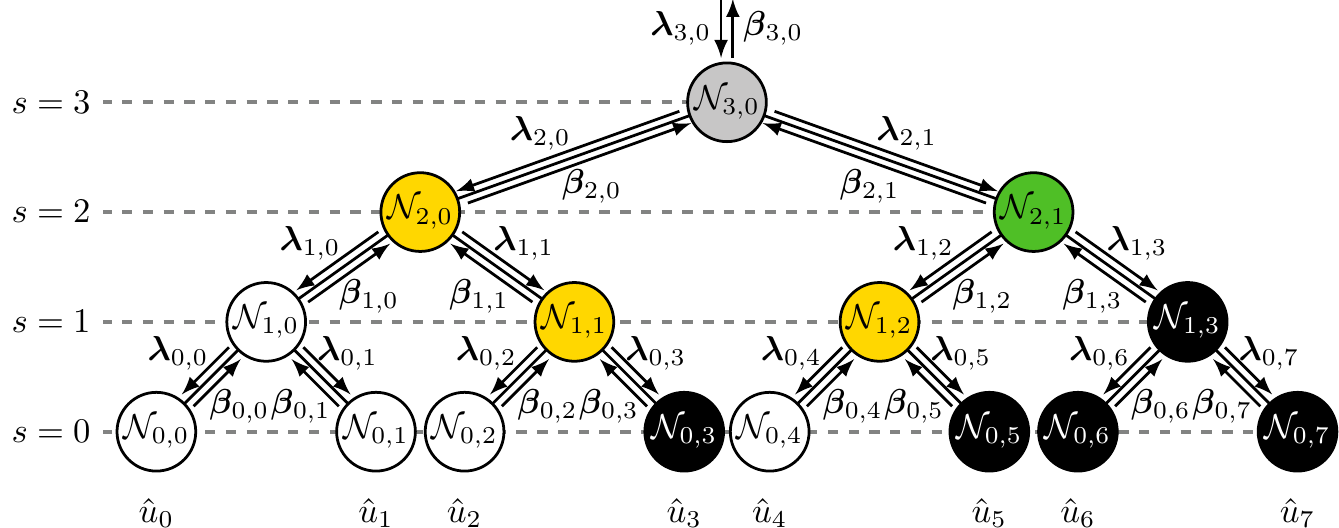}
    \caption{SC decoder tree for $N=8$ and $\bbF=\{0,1,2,4\}$. The node types are \tikzcircle[black, fill=\RZeroColor]{\tikzcirclewidth} for R0, \tikzcircle[black, fill=\ROneColor]{\tikzcirclewidth} for R1, \tikzcircle[black, fill=\RepColor]{\tikzcirclewidth} for REP, \tikzcircle[black, fill=\SpcColor]{\tikzcirclewidth} for SPC, and \tikzcircle[black, fill=shinygray]{\tikzcirclewidth} for a general node of any rate.}
    \label{fig:graph_dec_tree_ex_s4}
\end{figure}

\subsection{Successive Cancellation List (SCL) Decoding}

SC decoding only selects the locally optimal estimate for each information bit, and any wrong choice makes the entire estimated codeword incorrect.
Contrary to SC decoding, SCL decoding maintains a list of up to $L$ bit sequences (paths) by examining both hypotheses for each information bit~\cite{tal2015list}.
Each path is forked into two new paths.
The list is initialized with one path starting at the root.
Once the number of generated paths exceeds the list size, only the $L$ most reliable paths are kept by measuring their (approximate) path metrics (PMs) as
\begin{equation}\label{eq:PM_SCL_func}
    \PM_{0, i}^l \approx
    \begin{cases}
        \PM_{0, i-1}^l,                     & \text{if } \psum_{0,i}^{l}=\HD(\lambda_{0,i}^{l}) \,, \\
        \PM_{0, i-1}^l+|\lambda_{0,i}^{l}|, & \text{otherwise} \,, \\
    \end{cases}
\end{equation}
where $\lambda_{0,i}^{l}$ is the LLR of path $l$ at leaf node $\calN_{0, i}$, $\HD(\lambda_{0,i}^{l})$ is the hard decision based on $\lambda_{0,i}^{l}$, and $\psum_{0,i}^l$ is the $i$\=/th bit estimate for path $l$.

To differentiate between paths considered at each information bit, we refer to the $L$ paths before decoding an information bit as \emph{parent paths}, the $2L$ paths created from considering both hypotheses at an information bit as \emph{candidate paths}, and the $L$ selected paths with the smallest PMs as the \emph{surviving paths}.
When the next information bit is encountered, the previously selected paths are used as parent paths.
In~\eqref{eq:PM_SCL_func}, $\PM_{0, i-1}^l$ is the PM of a parent path and $\PM_{0, i}^l$ is the PM of a candidate path.

\subsection{Node-Based Decoding}\label{sec:node_based_decoding}

SC and SCL decoding suffer from long decoding latencies as the entire decoding tree is traversed.
However, some special types of nodes in the decoding tree have frozen bit patterns that allow for directly calculating the PSUMs using special decoding algorithms.
To label whether a leaf bit in the fanout tree of a node $\calN_{s,i}$ is frozen, we use a vector $\bd_{s,i}$, where $\bd_{s,i}[j]=0$ indicates a frozen bit and $\bd_{s,i}[j]=1$ indicates an information bit, respectively.
In the following, we omit the node index $i$ for brevity.

Using this notation, the initial four special nodes proposed in~\cite{alamdar2011simplified} and~\cite{sarkis2014fast} are:

\smallskip
\begin{tabular}{@{}p{6mm} @{}p{9mm} @{}l}
  1) & \textbf{R0}   & $\bd_s=(0,0,\cdots,0,0)$ \tabularnewline
  2) & \textbf{R1}   & $\bd_s=(1,1,\cdots,1,1)$ \tabularnewline
  3) & \textbf{REP}  & $\bd_s=(0,0,\cdots,0,1)$ \tabularnewline
  4) & \textbf{SPC}  & $\bd_s=(0,1,\cdots,1,1)$ \tabularnewline
\end{tabular}
\smallskip

Examples of these nodes are shown in Fig.~\ref{fig:graph_dec_tree_ex_s4}.
Note that some special nodes can be decomposed into other special nodes, e.g., the REP node $\calN_{2,0}$ contains both an R0 and an REP node.

The five additional nodes defined in~\cite{hanif2017fast} have the patterns:

\smallskip
\begin{tabular}{@{}p{6mm} @{}p{17mm} @{}l}
  5) & \textbf{TYPE-I}   & $\bd_s=(0,0,\cdots,0,0,0,1,1)$ \tabularnewline
  6) & \textbf{TYPE-II}  & $\bd_s=(0,0,\cdots,0,0,1,1,1)$ \tabularnewline
  7) & \textbf{TYPE-III} & $\bd_s=(0,0,1,1,\cdots,1,1,1)$ \tabularnewline
  8) & \textbf{TYPE-IV}  & $\bd_s=(0,0,0,1,\cdots,1,1,1)$ \tabularnewline
  9) & \textbf{TYPE-V}   & $\bd_s=(0,\cdots,0,1,0,1,1,1)$ \tabularnewline
\end{tabular}
\smallskip

\noindent
The top node in Fig.~\ref{fig:graph_dec_tree_ex_s4}, $\calN_{3,0}$, is an example of a TYPE-V node, which is composed of an REP node and an SPC node.

The G-REP and G-PC nodes were proposed in~\cite{condo2018generalized} and are more general nodes for low- and high-rate codes, respectively.
Their patterns are shown below, where $N_p$ indicates the code length of a node at stage $p$ with $p < s$ and each ``$\mathrm{X}$'' can either be a frozen or an information bit.

\vspace*{-0.75em}
\begin{tabular}{@{}p{6mm} @{\hskip 0.7mm}p{13mm} @{}l}
  10) & \textbf{G-REP} & $\bd_s=(0,0,\cdots,0,\overbrace{\mathrm{X},\cdots,\mathrm{X}}^{N_p})$ \tabularnewline
  11) & \textbf{G-PC} & $\bd_s=(\underbrace{0,\cdots,0}_{N_p},1,\cdots,1,1)$ \tabularnewline
\end{tabular}
\smallskip

Note that $N_p$ can be zero, meaning there are no $\mathrm{X}$s for the G-REP node and no $0$s for the G-PC node.
This new definition varies from~\cite{condo2018generalized}, but it allows the G-REP node to include the R0 node and the G-PC node to include the R1 node.

The more general SR node~\cite{zheng2021threshold}, shown in Fig.~\ref{fig:graph_sr}, is an extended form of the G-REP node, with multiple groups of R0/REP nodes and a source node located at stage $r$ that can be of any type.
The pattern of an SR node is expressed as:

\smallskip
\begin{tabular}{@{}p{6mm} @{\hskip 0.7mm}p{7mm} @{}l}
 12) & \textbf{SR} & $\bd_s=(\underbrace{0, \cdots, 0, \mathrm{X}}_{N_{s-1}}, \cdots, \underbrace{0, \cdots, 0, \mathrm{X}}_{N_{r+1}}, \underbrace{\mathrm{X}, \cdots, \mathrm{X}}_{N_r})$ \tabularnewline
\end{tabular}

\begin{figure}[t]
    \centering
    \includegraphics[width=\figwidth\columnwidth]{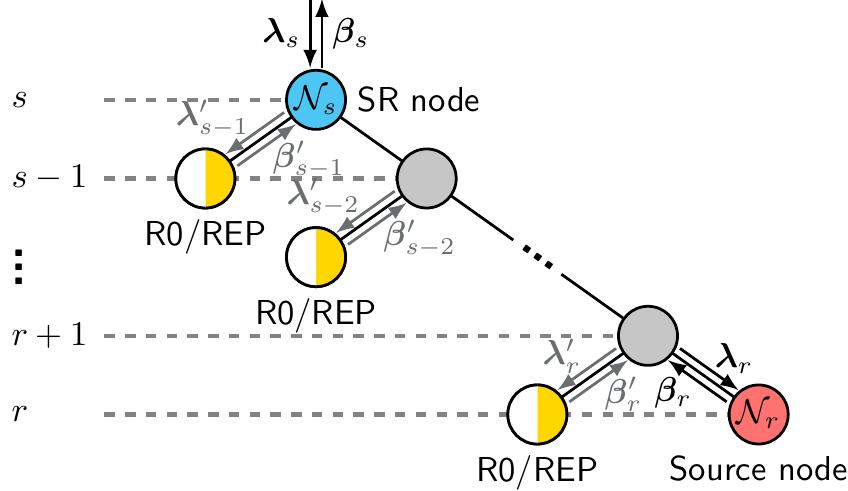}
    \caption{Tree representation of an SR node. \tikzcircle[black, fill=\SRColor]{\tikzcirclewidth} indicates the SR node and \tikzcircle[black, fill=\SourceColor]{\tikzcirclewidth} the source node of the SR node. The R0/REP node LLRs/PSUMs are indicated with prime $\prime$ as $\llr^{\prime}$ and $\psum^{\prime}$, respectively, and are grayed out as they are not calculated.}
    \label{fig:graph_sr}
\end{figure}

\subsection{Sequence Repetition (SR) Node Decoding}\label{sec:prelim_sr_node}

The SR node is a generalized node that covers most of the aforementioned nodes as special cases.
An SR-based decoder can thus support a wide range of frozen bit patterns.
An SR node, as shown in Fig.~\ref{fig:graph_sr}, is any node $\calN_s$ at stage $s$, whose descendants are all either R0 or REP nodes, except for the rightmost node at stage $r$, $0 \leq r \leq s$.
Node $\calN_r$ is called the \emph{source node}, and is a generic node of any rate.
The SR node structure is described by three parameters as $\mathrm{SR}(\bv, \mathrm{SNT}, r)$~\cite{zheng2021threshold}.
The length $(s-r)$ binary vector $\bv$ describes the distribution of R0 and REP nodes towards $\calN_r$.
For $0 \leq t < s-r$, $\bv[t] = 0$ and $\bv[t] = 1$ indicates that the left node at stage $(s-t-1)$ is an R0 node and an REP node, respectively.
As such, the sum of $\bv$ is the number of REP nodes, denoted as $W_v$.
The parameter SNT describes the source node type.
As an example, the node $\calN_{3,0}$ in Fig.~\ref{fig:graph_dec_tree_ex_s4} is an $\mathrm{SR}((1), \mathrm{SPC}, 2)$ node.
Additionally, we use a vector $\bm{\eta}$ to denote the last bit value (shown by an $\mathrm{X}$ in the special node description \pf{of the SR node}) of each R0/REP node, that is, $\bm{\eta}[t]=0$ when $\bv[t]=0$ and $\bm{\eta}[t]\in\{0,1\}$ when $\bv[t]=1$.

On a high-level, an SR node is decoded using different candidate LLR vectors, $\llr_r^k$, at the source node.
Each candidate LLR vector is generated as shown in~\eqref{eq:llr_source_node} from $\llr_s$ by repeatedly applying~\eqref{eq:fg_func_defB} using repetition sequences, $\bbS^k$, with each sequence generated from a case of $\bm{\eta}$~\cite{zheng2021threshold}.
Let $\bbS = \{\bbS^0, \bbS^1, \dots, \bbS^{2^{W_v}-1}\}$ denote the set of all possible repetition sequences for a given SR node.
For each $k$\=/th case of $\bm{\eta}$, the sequence $\bbS^k$ is derived as
\pf{\begin{equation}\label{eq:seq_rep_def}
    \bbS^k = (\bm{\eta}[0],0)\boxplus(\bm{\eta}[1],0)\boxplus\cdots\boxplus(\bm{\eta}[s-r-1],0) \,,
\end{equation}}
with $\boxplus$ describing the operation
\pf{\begin{equation}\label{eq:seq_rep_op_def}
    \begin{aligned}
         &(\bm{a}[0], \bm{a}[1], \cdots, \bm{a}[i]) \boxplus (\bm{b}[0], \bm{b}[1]) \\
        = &(\bm{a}[0]{+}\bm{b}[0], \bm{a}[0]{+}\bm{b}[1], \cdots, \bm{a}[i]{+}\bm{b}[0], \bm{a}[i]{+}\bm{b}[1]) \, .
    \end{aligned}
\end{equation}}

The SR node decoding process can then be divided into three main steps as:

1) The $|\bbS|$ LLR vectors to the source node are calculated as
\begin{equation}\label{eq:llr_source_node}
    \llr_{r}^k[j]=\sum\limits_{m=0}^{2^{s-r}-1}\left(1-2 \bbS^k[m]\right) \llr_{s}[2^r m +j] \,,
\end{equation}
where $0\leq j<2^{r}$ and $0\leq k<|\bbS|$.

2a) Given the $|\bbS|$ LLR vectors, the source node is decoded using $|\bbS|$ copies of the conventional SC decoder in parallel.

2b) Meanwhile, the \pf{ML} candidate of the $|\bbS|$ candidates is identified as
\begin{equation}\label{eq:psum_sr_ml_idx}
    \widehat{k}=\argmax_{0\leq k<|\bbS|}\sum_{j=0}^{2^r-1}|\llr_{r}^k[j]| \,.
\end{equation}

3) The SR node PSUM vector, $\psum_s$, is computed by repeating $\psum_{r}^{\widehat{k}}$, based on the corresponding $\widehat{k}$\=/th repetition sequence
\begin{equation}\label{eq:psum_sr}
    \psum_{s}[2^r m + j]=\psum_{r}^{\widehat{k}}[j]\oplus \bbS^{\widehat{k}}[m] \,.
\end{equation}
where $0\leq j<2^{r}$ and $0\leq m<2^{s-r}$.

Node-based decoding techniques can be used in step~2a when the source node is a special node and step 2b can be skipped when there are only R0 nodes as left-descendants \pf{since} $|\bbS|=1$.
Note that the LLRs for all the left R0/REP nodes under the SR node, $\llr^{\prime}_{s-1}, \llr^{\prime}_{s-2}, \dots, \llr^{\prime}_{r}$, are never calculated and are grayed out in Fig.~\ref{fig:graph_sr} along with the PSUMs.

\section{Proposed \pf{SR-}List Decoding Algorithms}\label{sec:proposed_algos}

In this section, we propose several node-based list decoding algorithms for a low-latency SCL decoder using a single generalized node, the SR node.
First, we present a general SR-List decoding algorithm to extend the SR node for list decoding with no error-correcting performance loss.
Then, we propose an optimized low-complexity SR-List decoding algorithm based on the 5G NR node distribution and the merging or simplification of some SR-List decoding operations.
We also provide a strategy for empirically reducing the number of path forks for SR-List decoding and we show how some 5G NR rate-matching schemes can reduce the decoding latency without any loss in error-correcting performance.

\subsection{General SR-List Decoding}\label{sec:general_SRL}

As described in Section~\ref{sec:prelim_sr_node}, the SR node provides a high degree of parallelization by considering all repetition sequences in parallel.
While this can provide a reduction in decoding latency compared to other node types, all parent paths and their path forks have to be considered for all repetition sequences during list decoding.
For an SR node with a source node of size $N_r$ and information length $K_r$, the total number of information bits is $W_v + K_r$, and the $L$ parent paths can each be forked $|\bbS|2^{K_r}$ ($|\bbS|=2^{W_v}$) times to account for all choices of information bits,
which significantly increases the decoding and sorting complexity.

To alleviate this issue, we simplify the SR-List decoding process by dividing it into two parts, described in Section~\ref{sec:list_dec_r0_rep} and Section~\ref{sec:list_dec_source_node}, respectively.
In the first part, for the R0 and REP nodes, we enumerate all information bit possibilities to generate $|\bbS|$ candidates from each of the $L$ parent paths, calculate the PMs of all $|\bbS|L$ candidate paths, and select $L$ survivors.
Then, in the second part, sequential node-based list decoding is performed for the source node using the previously selected $L$ candidates.
Note that in the hardware, we simplify the decoding for basic nodes such as R0, R1, REP, SPC, and TYPE-III as explained in Section~\ref{sec:hw_rsu}.

\subsection{General SR-List Decoding Part 1: R0/REP Node Decoding}\label{sec:list_dec_r0_rep}

In conventional node-based list decoding such as Fast-SCL decoding~\cite{hashemi2017fastflexible}, SR node decoding starts by decoding each of the left-descendant R0 and REP nodes in order, and the PMs for the $L$ parent paths at the source node are based on the input LLRs of all the individual R0 and REP nodes.
However, in SR-List decoding, these LLRs are unknown to the decoder.
Instead, we propose a four step process for efficiently calculating the PMs for $L$ parent paths at the source nodes:

1) The LLR vectors at the source node, $\llr_r^{l,k}$, can be calculated for each path as
\begin{equation}\label{eq:llr_source_node_SCL}
    \llr_r^{l,k}[j]=\sum\limits_{m=0}^{2^{s-r}-1}\left(1-2\bbS^k[m]\right) \llr_s^{l}[2^r m+j] \,,
\end{equation}
where $0 \leq l<L$, $0 \leq k<|\bbS|$, and $0\leq j<2^{r}$.

2) To calculate the corresponding PMs of the source node, we propose a method which does not directly decode the R0 and REP nodes.
Let $\widetilde{\psum}_r^{l,k} = \mathrm{HD}(\llr_{r}^{l,k})$, to distinguish from $\psum_r^{l,k}$ which is the source node output after decoding it.
Then, we calculate $\widetilde{\psum}_{s}^{l,k}$ as
\begin{equation}\label{eq:psum_sr_est_SCL}
    \widetilde{\psum}_{s}^{l,k}[2^r m + j]=\widetilde{\psum}_{r}^{l,k}[j] \oplus \bbS^k[m] \,, \\
\end{equation}
where $0\leq j<2^{r}$ and $0\leq m<2^{s-r}$.

3) With the hard decision PSUMs at the source node and SR node, we then calculate $\PM_r^{l,k}$ at the source node for each sequence and for each parent path as
\begin{equation}\label{eq:PM_source_node_SCL}
    \PM_r^{l,k} =\PM_s^l+\sum\limits_{j=0}^{2^s-1}| \widetilde{\psum}_s^{l}[j] - \widetilde{\psum}_{s}^{l,k}[j]| |\llr_{s}^{l}[j]| \,,
\end{equation}
where $\PM_s^l$ indicates the PM of the $l$\=/th parent path at the SR node.
Note that $\PM_r^{l,k}$ is identical with the results obtained from conventional Fast-SCL decoding where the LLRs and PSUMs from the R0 and REP nodes are \pf{calculated}.

4) With $\PM_r^{l,k}$, we select the $L$ best of the $|\bbS|L$ \pf{candidate paths} and use these $L$ paths for the source node list decoding.

Note that for each of the $L$ paths that survived, the selected repetition sequence is recorded as $\widehat{k}_l$ for $0 \leq l < L$.
In the next part of the general SR-List decoding, described in Section~\ref{sec:list_dec_source_node}, $l$ refers to a candidate that survived in step~4 above, and not to the parent paths at the SR node input.

\subsection{General SR-List Decoding Part 2: Source Node Decoding}\label{sec:list_dec_source_node}

With the $L$ surviving paths as the parent paths for the source node, the decoder can start the source node list decoding.
Unfortunately, the source node of the SR node can be any type of node, which significantly complicates the hardware implementation.
However, if we constrain the source node to a G-PC node, whose first $N_p$ bits are frozen bits as shown in Fig.~\ref{fig:graph_sr_optimized}, the source node can be viewed as a group of $N_p$ SPC nodes.
\pf{In Lemma 2 of~\cite{condo2018generalized}, the list decoding strategy for the G-PC node is presented where $N_p$ SPC nodes are decoded in parallel.}
However, such a parallel strategy explores many redundant candidates.
Therefore, inspired by the SPC and TYPE-III list decoding processes in~\cite{hashemi2017fastflexible} and~\cite{hanif2018fast}, respectively, we propose an efficient sequential list decoding algorithm for G-PC nodes.

1) For each path, split the LLRs and PSUMs into $N_p$ groups and find each group's minimum magnitude LLR index as
\begin{equation}\label{eq:gpc_index_SCL}
    \epsilon_q^{l}=\argmin\limits_{0 \leq i < N_r / N_p} |\llr_r^l[N_p i + q]| \,,
\end{equation}
where $0\leq q<N_{p}$ is the group index.

2) For each path and group, compute the initial parity
\begin{equation}\label{eq:gpc_parity_SCL}
    \gamma_q^l=\bigoplus\limits_{i=0}^{N_r/N_p-1} \widetilde{\psum}^l_{r}[N_p i+q] \,,
\end{equation}
with $\widetilde{\psum}^l_{r}$ obtained at the source node after calculating the source node LLRs using~\eqref{eq:llr_source_node_SCL}  as described in Section~\ref{sec:list_dec_r0_rep}.

3) For each path, generate the ML candidate by computing the PSUMs, $\widehat{\psum}_{r}^{l}$, which satisfy the even-parity constraint of each group as shown in~\eqref{eq:psum_gpc_ML_SCL} and calculate the corresponding candidate PM, $\widehat{\PM}^l_{r}$, as shown in~\eqref{eq:PM_gpc_ML_SCL}
\begin{align}
    \widehat{\psum}^l_{r}[N_p i+q] &=
    \begin{cases}
        \widetilde{\psum}^l_{r}[N_p i+q]\oplus \gamma_q^l, & \text{if } i=\epsilon_q^l \,, \\
        \widetilde{\psum}^l_{r}[N_p i+q], & \text{otherwise} \,.
    \end{cases}\label{eq:psum_gpc_ML_SCL} \\
    \widehat{\PM}^l_{r} &=\PM_r^l + \sum\limits_{q=0}^{N_p-1} \gamma_q^l  |\llr_r^l[N_p \epsilon_q^{l} + q]| \,, \label{eq:PM_gpc_ML_SCL}
\end{align}
with $\PM^l_{r}$ as the PM of a candidate allowed to survive after decoding the R0 and REP nodes, as described in Section~\ref{sec:list_dec_r0_rep}.

4) To reduce the exploration of redundant candidates, we modify the LLRs of each sub-group as follows.
Let ${\bm{\zeta}}_{r}^{l} = \llr_r^l$ except for $K_{r} = N_{r} - N_{p}$ bits which are defined as
\begin{equation}\label{eq:llr_gpc_modified_SCL}
    \bm{\zeta}_r^l[N_p i + q] = |\llr_r^l[N_p i + q]| + (1 - 2\gamma_q^l) |\llr_r^l[N_p \epsilon_q^{l} + q]| \,,
\end{equation}
where $0\leq i<N_r/N_p$ and $i\neq \epsilon_q^{l}$.

5) We then identify the bit-flipping candidates for further path forking by sorting $|\bm{\zeta}_r^l|$ in ascending order with sorted index $j^{l}$, i.e., $j^{l}$ is the index of the $j$\=/th minimum in $|\bm{\zeta}_r^l|$.
Note that the LLRs found in~\eqref{eq:gpc_index_SCL} are excluded from the sorting as these are used to satisfy the parity constraint of all candidates.

6) Then, we serially fork the paths from sorted index $0^{l}$ to $(K_{r}-1)^l$ for the $l$ paths, that is, the number of path fork operations is $K_{r}$.
The PMs for the generated codewords are
\begin{equation}\label{eq:PM_gpc_SCL}
    \widehat{\PM}_{r, j}^l =
    \begin{cases}
        \widehat{\PM}_{r, j-1}^l, & \text{if } \widehat{\psum}_r^l[j^{l}]=\widetilde{\psum}^l_r[j^{l}] \,, \\
        \widehat{\PM}_{r, j-1}^l+\Delta_{r,j}^l, &\text{otherwise} \,.
    \end{cases}
\end{equation}
where $\widehat{\PM}_{r, -1}^l=\widehat{\PM}_r^l$ and $\Delta_{r,j}^l$ is given as
\begin{equation}\label{eq:Delta_PM_gpc_SCL}
    \Delta_{r,j}^l=|\bm{\llr}_r^l[j^{l}]| + (1-2\gamma^l_{q_{j}^{l}}) \pf{|\llr^l_r [N_p \epsilon^l_{q_{j}^{l}} +q_{j}^{l}]|} \, ,
\end{equation}
with $q_{j}^{l} = j^{l} \mod N_p$, i.e., the group index of the LLR with sorting index $j^{l}$.
Note that if $\widehat{\PM}_{r, j-1}^l$ is incremented with $\Delta_{r,j}^l$, the corresponding $\gamma^l_{q_{j}^{l}}$ should be flipped.
The $L$ paths with the lowest PMs are selected for the next path forking and this process is repeated $\min(L-1,K_{r})$-times.

7) Finally, the $L$ SR node PSUMs can be obtained as
\begin{equation}\label{eq:psum_sr_final_SCL}
    {\psum}_{s}^{l}[2^r m + j]={\widehat{\psum}}_{r}^{l}[j] \oplus \bbS^{\widehat{k}_l}[m] \,, \\
\end{equation}
where $0\leq j<2^{r}$, $0\leq m<2^{s-r}$, and $\widehat{k}_{l}$ is the $\widehat{k}$\=/th sequence used for path $l$ found in Sec~\ref{sec:list_dec_r0_rep}.

\begin{figure}[t]
  \centering
  \includegraphics[width=\figwidth\columnwidth]{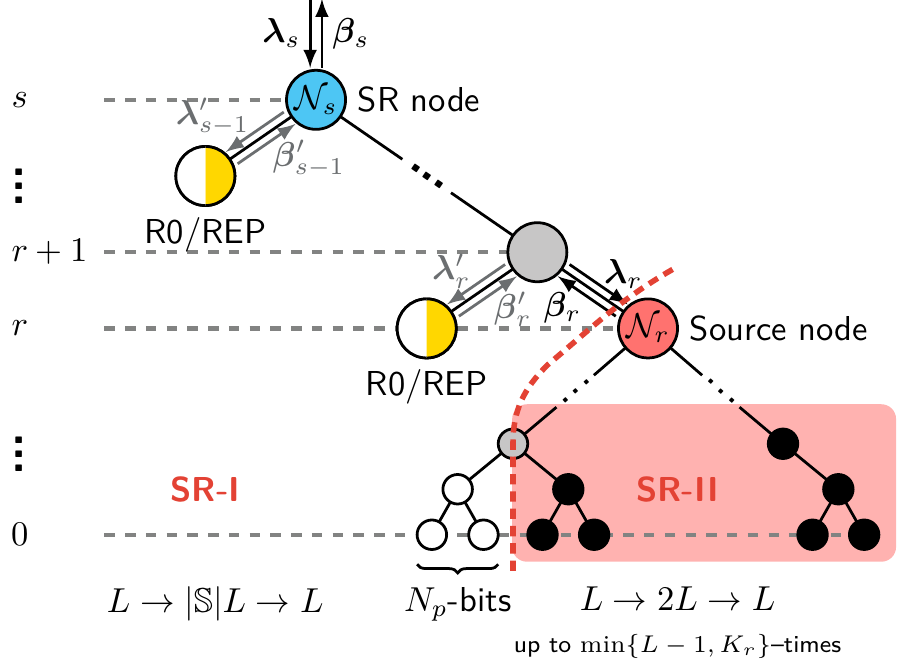}
  \caption{Optimized SR representation with SR\=/I and SR\=/II parts.}
  \label{fig:graph_sr_optimized}
\end{figure}

Modifying the LLRs in~\eqref{eq:llr_gpc_modified_SCL} results in a set of LLRs where the sub-groups with a parity of $1$ (i.e., $\gamma_q^l = 1$ in~\eqref{eq:gpc_parity_SCL}) are penalized and all LLR magnitudes in that group are reduced, making it more likely that those indices will be amongst the first to be explored, thus reducing the number of redundant candidates compared to~\cite{condo2018generalized}.
Note that if the G-PC node is an R1 node, we skip steps~1--4 and directly perform the path forking.
If the G-PC node is an SPC node, the LLRs in~\eqref{eq:llr_gpc_modified_SCL} are unnecessary and the process simplifies to SPC list decoding.

\subsection{Optimized SR-List Decoding}\label{sec:opt_sr_list_dec}

For the overall SR-List decoding, the source node list decoding waits until the list decoding of the R0/REP groups is complete.
To reduce the time complexity of this process, we propose an optimized SR-List decoding algorithm where we redefine the SR node representation and merge some SR-List decoding steps.
Moreover, by taking the specific 5G NR node distribution into account, we apply some constraints on the special nodes to significantly simplify the implementation.

\subsubsection{Optimized SR Node Representation}\label{sec:opt_sr_node_representation}

As described in Section~\ref{sec:list_dec_source_node}, the G-PC node is a good generalized form of the source node in the SR node, whose frozen bits are distributed in the front and thus do not cause path forking.
If we only use the G-PC node as the source node type, the decoding can be simplified by considering the source node parity check together with the list decoding of the R0 and REP segments in the SR node.\footnote{This constraint does not prevent the decoder from decoding any arbitrary code as code segments with an SR node incompatible with a G-PC source node can still be decoded using conventional decoding steps based on basic nodes. The impact on the time complexity then depends on the \pf{rate of occurrence} of such incompatible SR nodes, \pf{which are in general rare}.}
With this approach, the SR node is divided into two parts, SR\=/I (a low-rate part) and SR\=/II (a rate-one part), as shown in Fig.~\ref{fig:graph_sr_optimized}, with patterns

\begin{algorithm}[t]
    \small
    \SetKwInOut{Input}{Input}\SetKwInOut{Output}{Output}
    \caption{\texttt{Proposed SR-List Decoding}}
    \label{alg:Optimized_SR_List}
    \Input{$\llr_s^{l}\;\PM_s^l,\;\bbS$}
    \Output{$\psum_s^l$}
    \tcp{SR\=/I list decoding}
    \For{\upshape $l=0\;\textbf{to}\;L-1$}
    {
        \For{$k=0\;\textbf{to}\;|\bbS|-1$}
        {
            $\texttt{Calculate }\llr_r^{l,k}\texttt{ according to }\eqref{eq:llr_source_node_SCL}$\;
            $\texttt{Calculate }\widetilde{\psum}_{s}^{l,k}\texttt{ according to }\eqref{eq:psum_sr_est_SCL}$\;
            $\texttt{Calculate }\widehat{\PM}_r^{l,k}\texttt{ according to }\eqref{eq:PM_gpc_opt_SCL}$\;
        }
    }
    $\texttt{Select L paths from the }L\texttt{ smallest }\widehat{\PM}_r^{l,k}$\;
    \tcp{SR\=/II list decoding}
    \For{\upshape $j=0\;\textbf{to}\;\min(L-1,K_{s})$}
    {
        \For{\upshape $l=0\;\textbf{to}\;L-1$}
        {
            $\texttt{Calculate }\mathrm{PM}_{r, j}^{l}\texttt{ according to }\eqref{eq:PM_gpc_SCL}$\;
        }
    }
    $\texttt{Generate }\psum_{s}^l \texttt{ according to }\eqref{eq:psum_sr_final_SCL}$\;
    $\texttt{Return }\psum_{s}^l \texttt{ to the parent node}$\;
\end{algorithm}

\smallskip
\begin{tabular}{@{}p{6mm} @{}p{10mm} @{}l}
   13) & \textbf{SR\=/I}  & $\bd_s=(\overbrace{0,\cdots,0,\mathrm{X}}^{N_{s-1}}, \cdots, \overbrace{0,\cdots,0,\mathrm{X}}^{N_r}, \overbrace{0,\cdots,0}^{N_p})$ \tabularnewline
   14) & \textbf{SR\=/II} & $\bd_s = (\underbrace{\mathrm{1},\cdots,\mathrm{1}}_{N_r-N_{p}})$ \tabularnewline
\end{tabular}
\smallskip
where $N_p\in\{0,1, 2, 4,\dots\}$.

For SR\=/I, i.e., the group of R0 and REP nodes with the frozen bits from the G-PC node, the PMs for the $L$ paths that survived (after SR\=/I), denoted as $\widehat{\PM}_{r}^{l,k}$, can be derived by combining~\eqref{eq:PM_source_node_SCL} with~\eqref{eq:PM_gpc_ML_SCL} as follows
\begin{equation}\label{eq:PM_SRI_SCL}
    \begin{aligned}
        \widehat{\PM}_{r}^{l,k} = \PM_{s}^{l}&+\sum\limits_{j=0}^{2^s-1}|\widetilde{\psum}_{s}^{l}[j]- \widetilde{\psum}_{s}^{l,k}[j]||\llr_{s}^{l}[j]| \\
       &+\underbrace{\sum\limits_{q=0}^{N_p-1} \gamma_q^{l,k}  |\llr_r^{l,k}[\epsilon_q^{l,k} N_p+q]|}_{\pf{\Delta^{l,k}}} \,,
    \end{aligned}
\end{equation}
\pf{where ${\Delta^{l,k}}$ denotes the PM penalty from the parity constraints of the source node (G-PC node) for path $l$ with repetition sequence $k$.}
Note that all terms in~\eqref{eq:PM_SRI_SCL} can be calculated in parallel after~\eqref{eq:llr_source_node_SCL}.
Moreover, the best $L$ paths that survive from the $|\bbS|L$ paths of~\eqref{eq:PM_SRI_SCL} slightly improve the error-correcting compared to~\eqref{eq:PM_source_node_SCL} (without considering the parity), due to the more global (direct) selection of the $L$ best paths from the $|\bbS|L$ candidate paths.

The full SR-List decoding algorithm is given in Algorithm~\ref{alg:Optimized_SR_List}.
Note that in Algorithm~\ref{alg:Optimized_SR_List} we use~\eqref{eq:PM_gpc_opt_SCL} which we introduce in Section~\ref{sec:opt_sr_list_dec} as a simplification of~\eqref{eq:PM_SRI_SCL}.
In contrast to the SR\=/I part, the SR\=/II list decoding is performed by sequentially selecting $L$ paths from $2L$ generated candidate paths.
After list decoding the source node, the PSUM vectors to the SR parent node are calculated according to~\eqref{eq:psum_sr_final_SCL}.
In the remaining sections of this paper, we only use this optimized SR node representation.

\subsubsection{SR and G-PC node distribution in 5G NR}\label{sec:investigation_sr_gpc_5g}

For 5G NR polar codes, we can potentially find some constraints on the SR and G-PC node parameters which simplify the hardware \pf{with} little to no increase in worst-case decoding latency.
In particular, we note that more complex, potentially unsupported SR or G-PC nodes can still be decomposed into other nodes, which require slightly more time to decode, but where the \pf{reduced decoding} time does not justify the additional hardware resources.
To find \pf{constraints which reduce the overall complexity}, we check all valid \pf{5G NR polar} code configurations, \pf{i.e., combinations of $A$ and $G$.}
For PDCCH we use $A\in[12,140]$ and $G\in[36,8192]$.
\pf{For PUCCH/PUSCH we use $A\in[12,1706]$ and all the supported $G$ as shown below (taken from (1) in~\cite{egilmez2019development})}
\pf{\begin{equation}
G \in \begin{cases}
  [A + 9, 8192]                     & \text{if } A \in [12, 19] \\
  [A + 11, 8192]                    & \text{if } A \in [20, 359] \\
  [A + 11, 16385]                   & \text{if } A \in [360, 1012] \\
  [2 \lceil A / 2 \rceil+22, 16385] & \text{if } A \in [1013, 1706]. \\
\end{cases}
\end{equation}}

In Table~\ref{tab:5G_gpc}, we show the percentages of the different SR \pf{node} sizes and G-PC node types \pf{depending on the number of repetition sequences $|\bbS|$ in the SR node and the number of frozen bits $N_p$ in the G-PC node.}
For SR nodes in PDCCH, the number of SR nodes with \pf{$|\bbS|\leq4$ and $|\bbS|\leq8$} accounts for \pf{\SI{92.16}{\percent} and \SI{98.12}{\percent}}, respectively.
For PUCCH and PUSCH, the number of SR nodes with \pf{$|\bbS|\leq4$ and $|\bbS|\leq8$} is \pf{\SI{95.7}{\percent} and \SI{99.03}{\percent}, respectively.}
Additionally, we note that SR nodes with $|\bbS|>32$ never occur.
As described in Section~\ref{sec:general_SRL}, the computational complexity of decoding the SR node part with the R0 and REP nodes is linearly related to $|\bbS|L$.
Therefore, as SR nodes with large $|\bbS|$ rarely occur, we only consider $|\bbS|_{\mathrm{max}} \in \{2, 4, 8\}$ as interesting options for further evaluation in the hardware implementation.

\begin{table}[t]
    \captionsetup{font=small, justification=centering}
    \centering
    \caption{\MakeUppercase{SR and G-PC node distribution in 5G NR}}
    \label{tab:5G_gpc}

    \small

    \begin{tabular}{lrrr}
        \toprule
        \textbf{Name} & \textbf{Size} & \textbf{PDCCH} $[\%]$ & {\textbf{PUCCH/}\textbf{PUSCH}} $[\%]$ \\ \midrule
        \multirow{6}{*}{$\mathbb{|S|}$} & $1$ & $62.78$ & $68.85$             \\
        & $2$           & $19.07$              & $19.12$             \\
        & $4$           & $10.31$              & $7.73$              \\
        & $8$           & $5.96$               & $3.33$              \\
        & $16$          & $1.88$               & $0.93$              \\
        & $32$          & $4.31\times10^{-4}$  & $4.28\times10^{-2}$ \\ \midrule
        \multirow{3}{*}{$N_{p}$} & $0$  & $8.47$ & $17.42$   \\
        & $1$           & $83.91$              & $76.31$             \\
        & $2$           & $7.62$               & $6.27$              \\ \bottomrule
    \end{tabular}
\end{table}

For the G-PC node in PDCCH and PUCCH/PUSCH, we only observe $N_p\in \{0, 1, 2\}$, where $N_p = 0$ corresponds to having a R1 node, $N_p = 1$ an SPC node, and $N_p = 2$ a TYPE-III node.
G-PC nodes with $N_p>2$ never appear.
Therefore, even though the G-PC node is a generalized node, only a limited number of cases occur in valid 5G NR codes and we therefore restrict our decoder to $N_{p_\mathrm{max}}=2$.

Consequently, in the following, all simulated SR-List decoding results are based on the optimized SR-List decoding with the constraints $|\bbS|_{\mathrm{max}}=8$ and $N_{p_\mathrm{max}}=2$.
We can then simplify~\eqref{eq:PM_SRI_SCL} into three explicit formulas for the three supported G-PC cases, as shown in~\eqref{eq:PM_gpc_opt_SCL} and~\eqref{eq:Delta_PM_gpc_opt_SCL}
\begin{equation}\label{eq:PM_gpc_opt_SCL}
    \widehat{\PM}_{r}^{l,k} = \PM_s^l + \sum\limits_{j=0}^{2^s-1}| \widetilde{\psum}_{s}^{l}[j] - \widetilde{\psum}_{s}^{l,k}[j]| |\llr_{s}^{l}[j]| +  \pf{\Delta^{l,k}} \, ,
\end{equation}
where $\pf{\Delta^{l,k}}$ from~\eqref{eq:PM_gpc_opt_SCL} is expanded as
\begin{equation}\label{eq:Delta_PM_gpc_opt_SCL}
    \pf{\Delta^{l,k}} = \begin{cases}
        0&\text{if $N_p=0$} \\
        \gamma^{l,k}  |\llr_r^{l,k}[\epsilon^{l,k}]|,&\text{if $N_p=1$} \\
        \gamma_{\EV}^{l,k}|\llr_r^{l,k}[\epsilon_{\EV}^{l,k}]| + \gamma_{\OD}^{l,k}|\llr_r^{l,k}[\epsilon_{\OD}^{l,k}]|,&\text{if $N_p=2$} \\
    \end{cases}
\end{equation}
where $\gamma^{l,k}$ is the SPC node parity, $\gamma_{\EV}$ and $\gamma_{\OD}$ the parities of the even and odd nodes in the TYPE-III node, respectively, $\epsilon$ refers to the index of the minimum magnitude LLRs of the SPC node, and $\epsilon_{\EV}$ and $\epsilon_{\OD}$ the indices of the corresponding minimum magnitude LLRs of even and odd nodes in the TYPE-III node.

\begin{figure}[t]
    \centering
    \includegraphics[width=\figwidth\columnwidth]{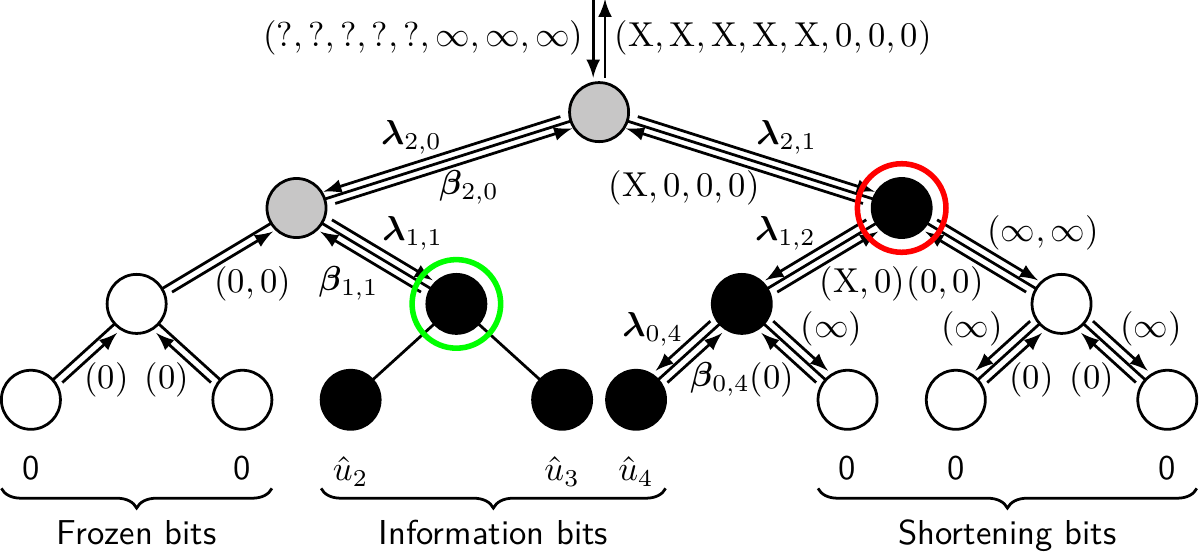}
    \caption{SC decoder tree with two frozen bits, three information bits, and three shortening bits. An arbitrary LLR value is indicated as $?$, a PSUM value $\mathrm{X}$ which can be either 0 or 1, and $\infty$ is for an LLR value for the shortening. The green and red encircled nodes are the first and last nodes, respectively.}
    \label{fig:graph_5g_rate_matching}
\end{figure}

\subsection{General Latency Optimizations}\label{sec:opt_general_latency_opt}

In this section, we describe two additional decoding optimizations: \pf{rate-matching adaptation and empirical path forking}, which can be used in any node-based SCL decoder.

\subsubsection{Rate-Matching Adaptation}\label{sec:opt_5g_rm_opt}

In 5G NR, rate-matching is applied to a codeword using either puncturing, shortening, or bit-repetition~\cite{egilmez2019development}.
For puncturing, the $N-E$ first codeword bits are removed and the LLRs after rate-recovery have value zero.
With shortening, the $N-E$ last codeword bits are removed, and the LLRs are set to $\infty$ after rate-recovery.
In our decoder, we use puncturing and shortening bits along with the first information bit index to speed up the decoding.

In Fig.~\ref{fig:graph_5g_rate_matching} we show an example of using the first information bit and the shortening bits in the decoding.
As the PSUMs at frozen bits are always 0, all R0 nodes have an output PSUM vector of all 0s and the first R0 nodes can thus be skipped.
Decoding can start directly at the green encircled node.
For the shortening bits, we treat a node with only information and shortening bits as a R1 node, like the red encircled node in Fig.~\ref{fig:graph_5g_rate_matching}.
The decoder can immediately decode this node instead of first traversing the corresponding subtree.
We note that this method does not affect the error-correcting performance and it is cheap to implement in hardware.
Note that if puncturing is used instead of shortening, the puncturing bits can be combined together with the frozen bits at the front to potentially skip even more R0 nodes.

\subsubsection{Empirical Path Forking}\label{sec:opt_empirical_path_forking}

All aforementioned algorithms for SR-List decoding work without any error-correcting performance loss.
However, some of the $\min(L-1, K_{s})$ path forks are often redundant for high-rate nodes (i.e., R1, SPC, and TYPE-III \pf{nodes}), where the bit-channels have high reliability.
In practice, the number of path forks can be empirically reduced to a constant value  \pf{$T_\mathrm{SNT}$} for a general node, as first proposed in~\cite{hashemi2017fastflexible}, which only negligibly degrades the error-correcting performance and increases the throughput by reducing the latency from path forking.
With this optimization, the number of path forks changes as follows
\begin{equation}\label{eq:empirical_path}
    \min(L-1,K_s)\rightarrow\min(\pf{T_\mathrm{SNT}},K_{s}) \,,
\end{equation}
where \pf{$T_\mathrm{SNT}$} in our decoder is either $T_{\mathrm{R1}}$, $T_{\mathrm{SPC}}$, or $T_{\mathrm{TYPE-III}}$, depending on the node type.

Using different values of \pf{$T_\mathrm{SNT}$}
\pf{as an upper limit on the number of path forks} provides further flexibility to trade-off error-correcting performance for the speed of SR-List decoding.
Numerical results shows that for $L=4$, we can select $T_{\mathrm{R1}}=1$, $T_{\mathrm{SPC}}=2$, and $T_{\mathrm{TYPE-III}}=2$.
For $L=8$ we can select $T_{\mathrm{R1}}=2$, $T_{\mathrm{SPC}}=3$, and $T_{\mathrm{TYPE-III}}=3$.
These values result in a negligible reduction in FER, as illustrated in Section~\ref{sec:SR_List_FER}, and provide a large latency reduction.
As each path fork generally requires one clock cycle (CC), this technique helps to significantly reduce the decoding latency.

\begin{figure}[t]
  \centering
  \begin{subfigure}[t]{\columnwidth}
    \centering
    \includegraphics[width=\figwidth\columnwidth]{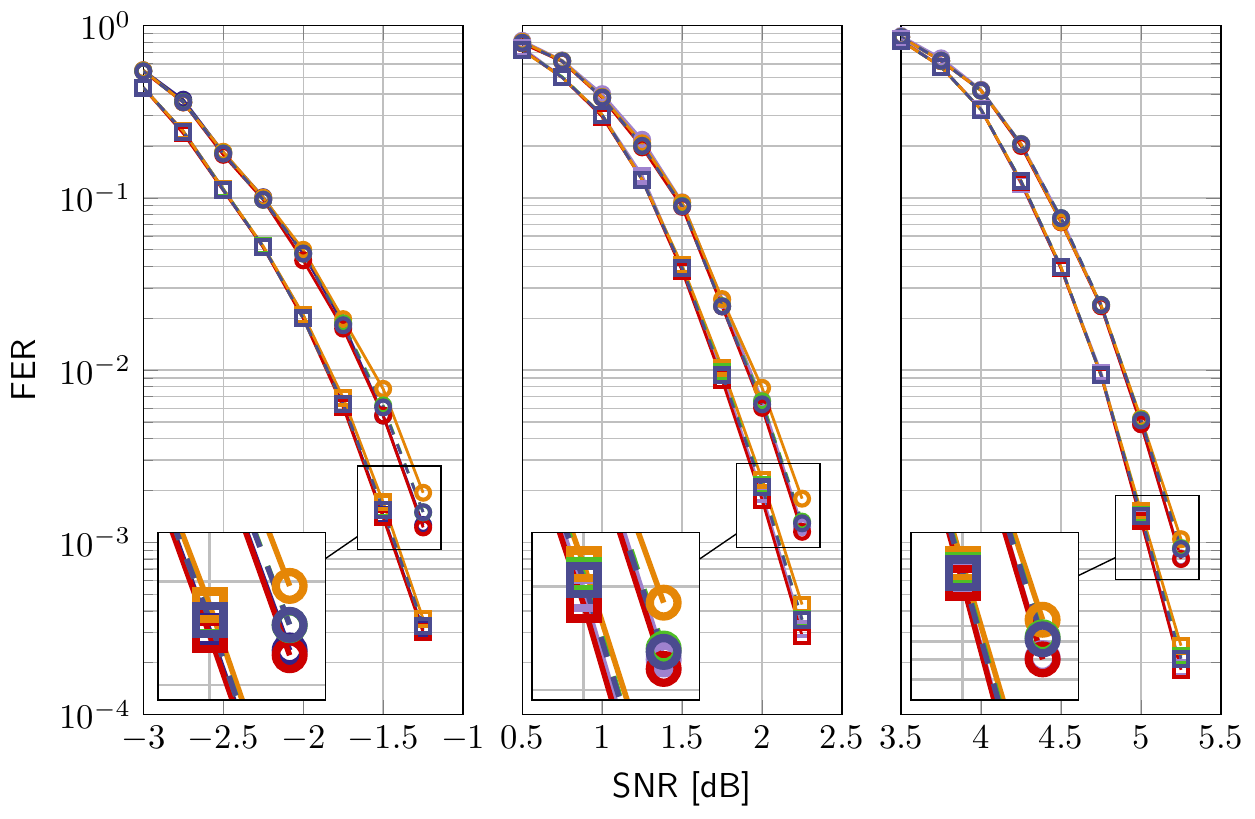}
    \caption{\pf{UL-$(1024,256)$, UL-$(1024,512)$, and UL-$(1024,768)$}}
    \label{fig:SR_UL_FER}
  \end{subfigure}
  \vspace*{0.25cm}

  \begin{subfigure}[t]{\columnwidth}
    \centering
    \includegraphics[width=\figwidth\columnwidth]{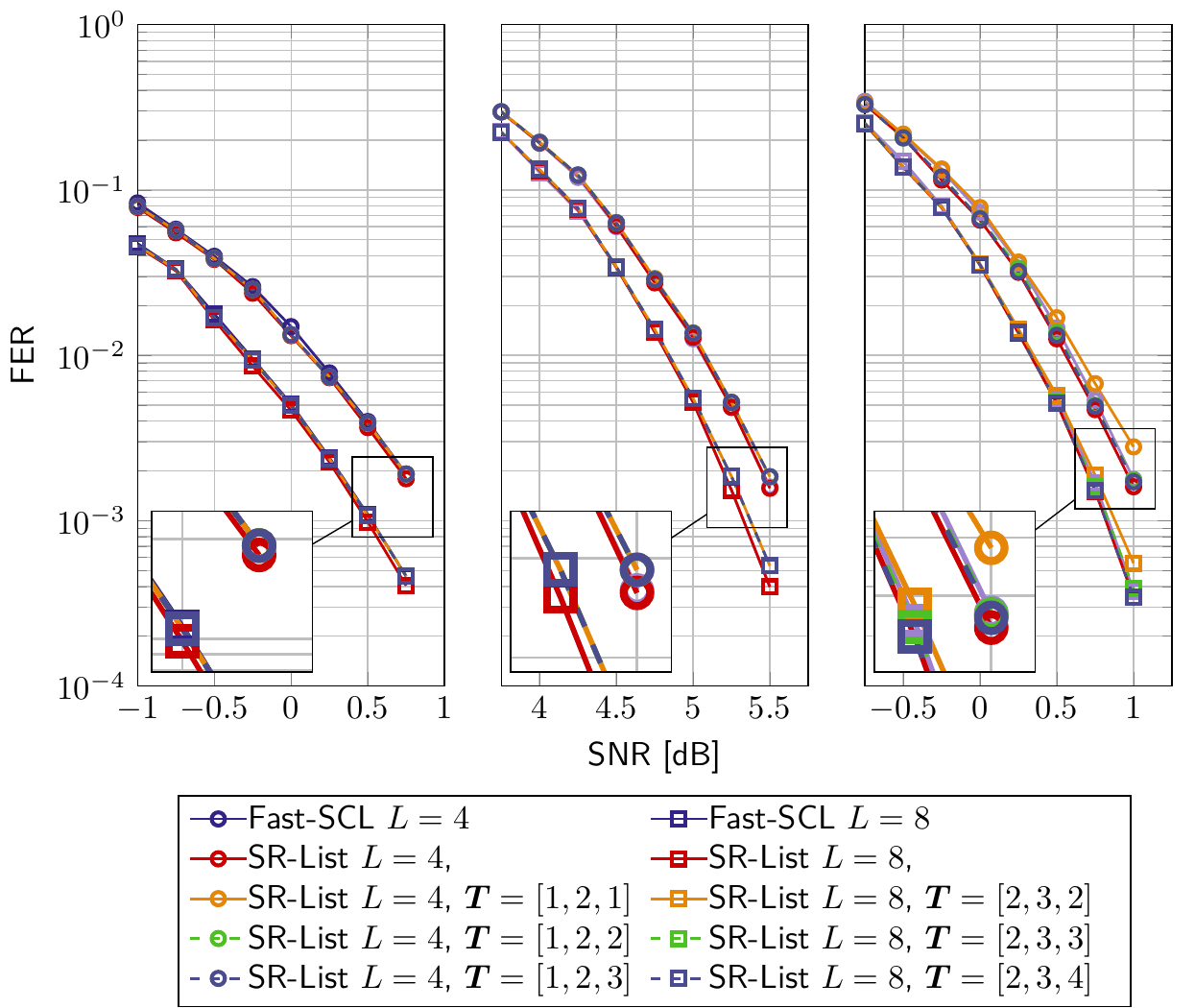}
    \caption{\pf{DL-$(108,12)$, DL-$(216,140)$, and DL-$(432,140)$}}
    \label{fig:SR_DL_FER}
  \end{subfigure}
  \caption{\pf{FER performance of Fast-SCL\cite{hashemi2017fastflexible}, SR-List, and SR-List with empirical path forking for UL and DL codes with $L \in \{4,8\}$. We use $\bm{T} = [T_\mathrm{R1}, T_\mathrm{SPC}, T_\mathrm{TYPE-III}]$ to indicate the number of path forks for each node type.}}
  \label{fig:SR_FER}
\end{figure}

\begin{figure*}[t]
    \centering
    \includegraphics[width=0.76\linewidth]{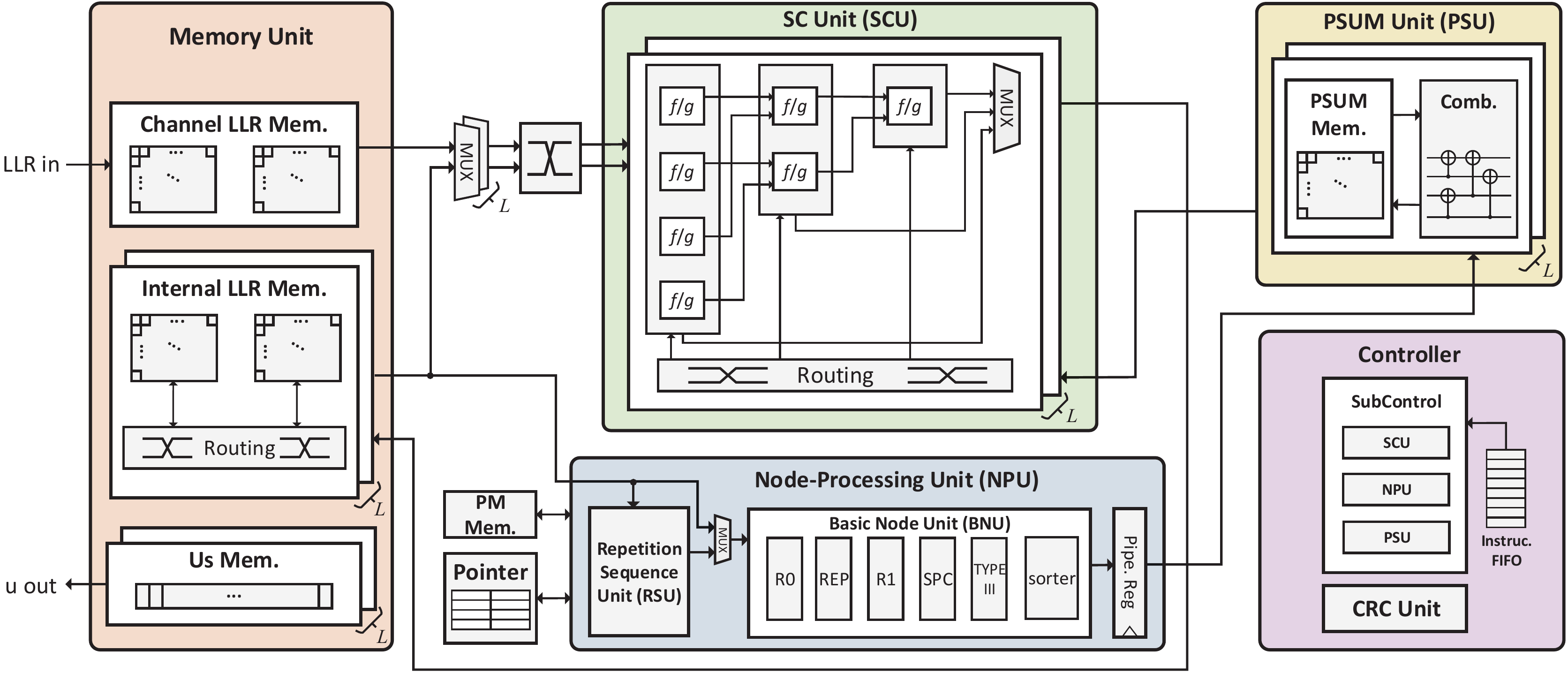}
    \caption{High-level overview of the node-based SR-List polar decoder architecture. The decoder comprises various memories, the SCU for $f/g$-operations, the NPU for node-based decoding, the PSU for PSUM combinations, and control logic.}
    \label{fig:architecture_SCL}
\end{figure*}

\subsection{Error-Correcting Performance of Proposed Algorithms}\label{sec:SR_List_FER}

In this section, the error-correcting performance of the proposed algorithms are provided by simulating 5G NR polar codes.
For UL channels (PUCCH/PUCSH), we generally only use $E=1024$ with different $R$.
For DL channels (PDCCH), we consider typical values of $E$ $\in\{108, 216, 432, 864, 1728\}$ and $A \in [12, 140]$, as described in~\cite{egilmez2019development}.
Note that most $E \in \{864, 1728\}$ bit patterns overlap with $E=432$ due to bit-repetition and we thus only use $E \in \{108,216,432\}$ for DL.
\pf{For both DL and UL channels we limit $L_{\mathrm{max}}$ to $8$ as this is what has been selected as the baseline for the maximum list size by 3GPP during the standardization process~\cite{5GSCL}}.
All data is modulated by binary phase-shift keying and sent over an additive white Gaussian noise channel.

Fig.~\ref{fig:SR_FER} illustrates the FER performance for Fast-SCL decoding~\cite{hashemi2017fastflexible}, for our SR-List decoding algorithm, and for the approximate SR-List decoding with different \pf{$T_\mathrm{SNT}$} for \pf{UL and DL codes with $L\in\{4,8\}$.}
We note that due to the selection of the globally optimal $L$ paths from the $|\bbS|L$ paths, our SR-List decoding algorithm show a slight performance improvement compared with Fast-SCL~\cite{hashemi2017fastflexible}.
For the empirical path forking, the results in Fig.~\ref{fig:SR_FER} show that for $L=4$, the constants can be selected as $T_\mathrm{R1}=1$, $T_\mathrm{SPC}=2$, and $T_\mathrm{TYPE-III}=2$ with minimal error-correcting performance degradation.
For $L=8$ we can select $T_\mathrm{R1}=2$, $T_\mathrm{SPC}=3$, and $T_\mathrm{TYPE-III}=3$.

\begin{figure*}[t]
    \centering
    \includegraphics[width=0.78\linewidth]{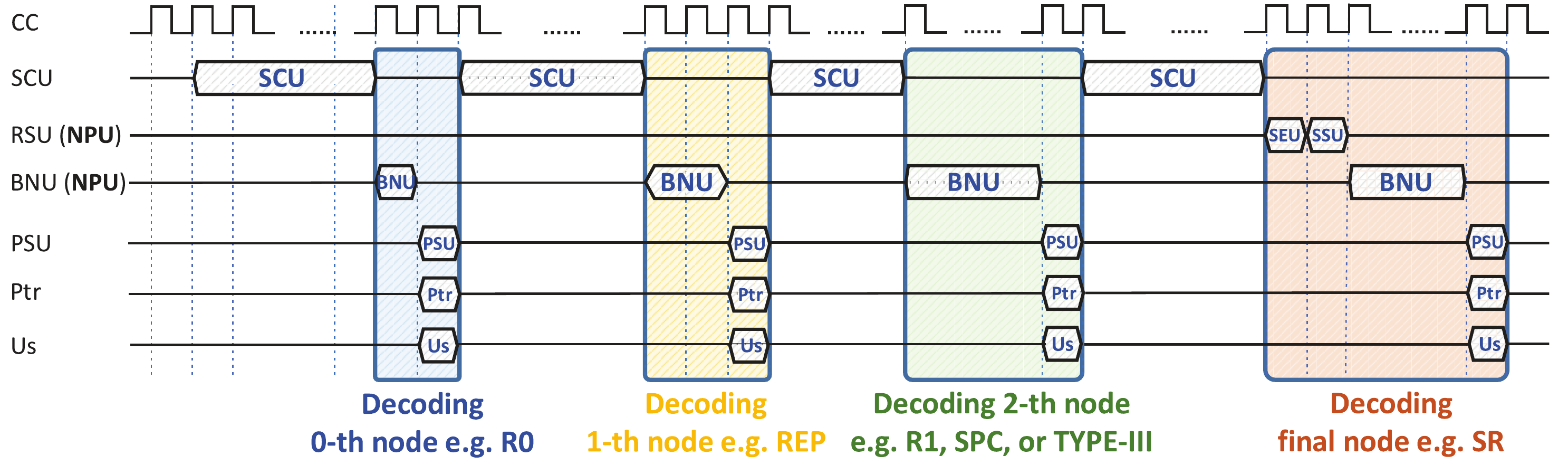}
    \caption{Example timing schedule of the proposed SR-List polar decoder.}
    \label{fig:SRL_timing}
\end{figure*}

\section{Node-Based Polar Decoder Architecture}\label{sec:hw_node_based_polar_dec_arch}

In this section, we describe the implementation of our node-based SCL decoder, which is compatible with all 5G NR polar codes and with the algorithmic optimizations described above.
More specifically, we give a detailed description of each component, which, at a high-level, comprises three types of functions: memory, decoder cores, and control logic.
High-level illustrations of the architecture and timing schedule are shown in Fig.~\ref{fig:architecture_SCL} and Fig.~\ref{fig:SRL_timing}, respectively.

For the memory, our decoder has channel LLR and internal LLR memories, Us~memory, PSUM memory, PM memory, a pointer memory, and an instruction FIFO.
The decoder cores are used to process the LLRs to get the codeword.
They comprise the SC unit (SCU), the node-processing unit (NPU), and the PSUM unit (PSU).
As illustrated in Fig.~\ref{fig:SRL_timing}, the SCU first calculates the internal LLRs until a special node is encountered.
Then, the NPU decodes the special node using the LLRs calculated by the SCU.
\pf{Note that the NPU is the generalized node implementation that supports list decoding for all SR nodes with the constraints from Section~\ref{sec:investigation_sr_gpc_5g}.}
Finally, the PSU is used for the PSUM calculations and for copying the PSUMs as required by the list decoding.
In parallel to the PSU, the decoder updates the message bits in the Us~memory and the corresponding indices in the pointer memory.
Afterwards, the SCU runs again until the next node is encountered and the procedures repeat until the final node is decoded.
This process is orchestrated by the controller, which reads from a list of instructions and coordinates the components of the decoder.

\subsection{Decoder Memories}\label{sec:hw_dec_mem}

The largest memories are the channel LLR memory and the internal LLR memory, shown on the left side of Fig.~\ref{fig:architecture_SCL}.
These store the received channel LLRs and the internal LLRs which are generated in the decoder.
Note that while the number of bits used for channel and internal LLR values is the same, the internal LLR memory is instantiated $L$ times to store the internal LLRs for the $L$ paths.
The internal LLR memory of each path is divided into two parts:
an upper part for stages closer to the root where almost no special nodes can be identified and a lower part where special nodes are frequent.
The upper part of the memory can be pruned to only store LLRs for some stages as described in Section~\ref{sec:hw_scu_flexible_multistage}, while the lower part provides storage for all remaining stages to avoid limiting the optimal use of the NPU.
The internal LLR memory makes use of another memory, the pointer memory, shown at the bottom of Fig.~\ref{fig:architecture_SCL}.
This pointer memory tracks the physical location of the internal LLRs for the individual paths in the $L$ LLR memories based on the path forking history~\cite{Alex2014TCASII}.

The decoder also contains various smaller memories.
The PSUM memory, shown in the PSU in the upper right corner of Fig.~\ref{fig:architecture_SCL}, holds the PSUM values generated after running the NPU, and is further explained in Section~\ref{sec:hw_psu}.
Note that to support the optimizations described in Section~\ref{sec:opt_general_latency_opt} the PSUM memory is always cleared to 0 before decoding a new codeword.
The Us~memory, shown in the lower left corner of Fig.~\ref{fig:architecture_SCL}, stores the decoded message bits which are generated by re-encoding the PSUMs.
At the end of the decoding process, once the CRC has been calculated for all final paths, the decoded message can be read out from the Us~memory.
The PM memory, shown above the pointer memory in Fig.~\ref{fig:architecture_SCL}, holds the PMs used during the decoding.
Finally, the instruction FIFO holds the instructions which are fed to the controller.

\begin{figure}[t]
    \centering
    \includegraphics[width=\figwidth\columnwidth]{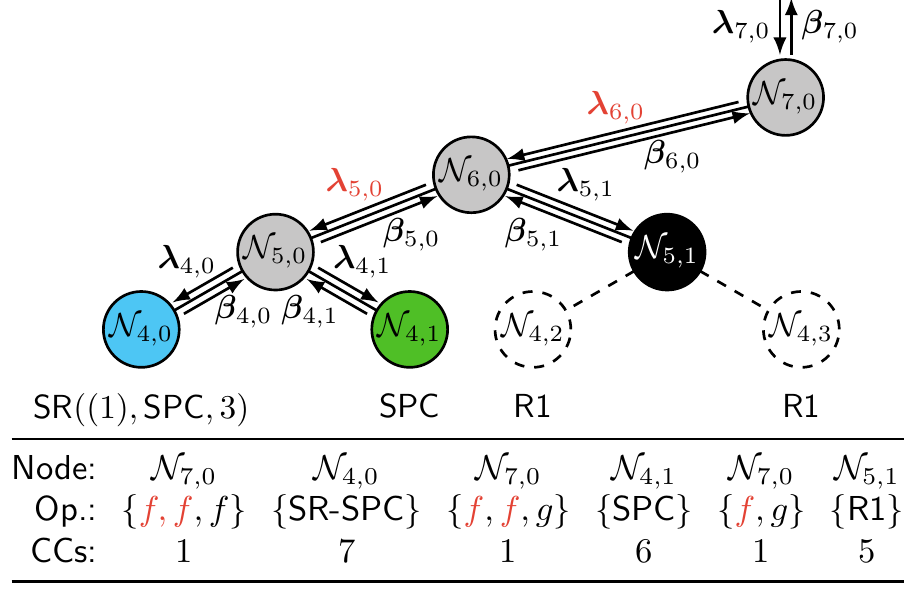}
    \caption{Flexible multi-stage decoding for $L=4$ with decoding tree and operation schedule for a 3 \pf{stage} SCU with 64 PEs in the first \pf{stage}. LLRs in red are not saved to memory. Node latencies are obtained from Table~\ref{tab:npu_cycle_cost} with 1 CC added for updating the PSUM memory using the PSU.}
    \label{fig:graph_scu}
\end{figure}

\subsection{SCU with Flexible Multi-Stage Decoding}\label{sec:hw_scu_flexible_multistage}

The SCU, shown at the top of Fig.~\ref{fig:architecture_SCL}, calculates the internal LLRs using processing-elements (PEs) that implement the $f$- and $g$-functions in~\eqref{eq:fg_func_defA} and~\eqref{eq:fg_func_defB}, respectively.
The SCU calculates the LLRs for multiple stages of the decoding tree in a single cycle by connecting PEs to process multiple \pf{stages} in a longer combinatorial path.

With multi-stage decoding, our decoder can significantly reduce the internal LLR memory size since intermediate LLRs are re-calculated  combinatorially (i.e., without the need for dedicated CCs) rather than stored.
This is especially important in list decoding, as the internal LLR memory, which is the largest of the memories, is instantiated $L$ times.
The concept of multi-stage decoding with memory reduction is shown in Fig.~\ref{fig:graph_scu}.
For an SCU with 3 \pf{stages} and $64$ PEs in the first \pf{stage}, $\llr_{7,0}$ can be read to directly calculate $\llr_{4,0}$ in one CC.
However, if the SCU has less than $64$ PEs, a cycle penalty compared to single stage decoding with no memory optimization is incurred.
This penalty arises since it takes multiple CCs to read $\llr_{7,0}$ and those multiple CCs are required repeatedly for all LLRs in Fig.~\ref{fig:graph_scu} that are directly calculated from $\llr_{7,0}$.
The SCU also calculates the LLRs $\llr_{5,0}$ and $\llr_{6,0}$ but these are not saved as the internal LLR memory in this example only stores every $3$rd stage starting from the highest stage.

In our polar decoder, the multi-stage decoding needs to be refined further since special nodes may be encountered before the last stage calculated inside the SCU.
In the example in Fig.~\ref{fig:graph_scu}, a straightforward $3$-stage SCU would calculate the LLRs up to $\calN_{4,0}$, $\calN_{4,1}$, $\calN_{4,2}$, and $\calN_{4,3}$, which are SR, SPC, and two R1 nodes, respectively, with a cost of $27$ CCs.
However, the larger R1 node $\calN_{5,1}$ could be decoded directly in only $21$ CCs (\SI{28.6}{\percent} less CCs) if the intermediate LLRs $\llr_{5,1}$ were available.
Therefore, we employ flexible multi-stage decoding, which stops at the first special node that the NPU can decode, even in the intermediate \pf{stages} of the SCU, and only saves the LLRs for this node (in the example in Fig.~\ref{fig:graph_scu} $\llr_{5,1}$).
The flexibility of returning the LLRs from intermediate \pf{stages} in the SCU increases the number of opportunities for identifying special nodes and results in an overall cycle count reduction by using the NPU earlier.
However, flexible multi-stage decoding requires that the internal LLR memory can save the LLRs for all stages where the NPU can decode nodes.
As most of the LLR memory is needed for the higher stages of the decoding tree, where node-based decoding is not used, the memory reduction is still significant, even though these lower stages can no longer be pruned.

While multi-stage decoding has been used in some previous works such as~\cite{liu20185} and~\cite{ercan2019operation}, our work improves on these in several ways.
In~\cite{liu20185}, multi-bit decoding is used at a fixed stage instead of node-based decoding which our decoder can do whenever a special node is encountered.
In~\cite{ercan2019operation}, multi-stage decoding is only used at lower stages where node-based decoding is often more efficient and their decoder can do at most two combined $f$-operations and not multiple $g$-operations.
This limitation is not present in our decoder.

\subsection{Repetition Sequence Unit (RSU) of the NPU}\label{sec:hw_rsu}

The NPU, shown at the bottom of Fig.~\ref{fig:architecture_SCL}, performs the node-based list decoding and comprises two parts based on the SR node structure described in Section~\ref{sec:opt_sr_list_dec}: the repetition sequence unit (RSU) for the SR\=/I part and the basic node unit (BNU) for the SR\=/II part.
\pf{As a generalized node implementation, the NPU can decode most SR nodes encountered in practical 5G scenarios with maximum hardware re-use to enhance the hardware efficiency.}
Note that for SR nodes where $s=r$, i.e., the SR node and the source node are the same, the SR node is decoded as a standard R0, REP, or G-PC node and the BNU is directly activated, skipping the RSU.
Similar to the special node constraints described in Section~\ref{sec:opt_sr_list_dec}, we fix the maximum node size that the NPU can decode as $N_{s_\mathrm{max}}$, which represents the trade-off between hardware complexity and decoding latency.
Specifically, increasing $N_{s_\mathrm{max}}$ allows for finding larger nodes which \pf{helps} reduce \pf{the} decoding time.
However, this also leads to a significantly more complex NPU implementation which can potentially be slower overall due to an increase in the length of the critical path.

\begin{figure}[t]
    \centering
    \includegraphics[width=\figwidth\columnwidth]{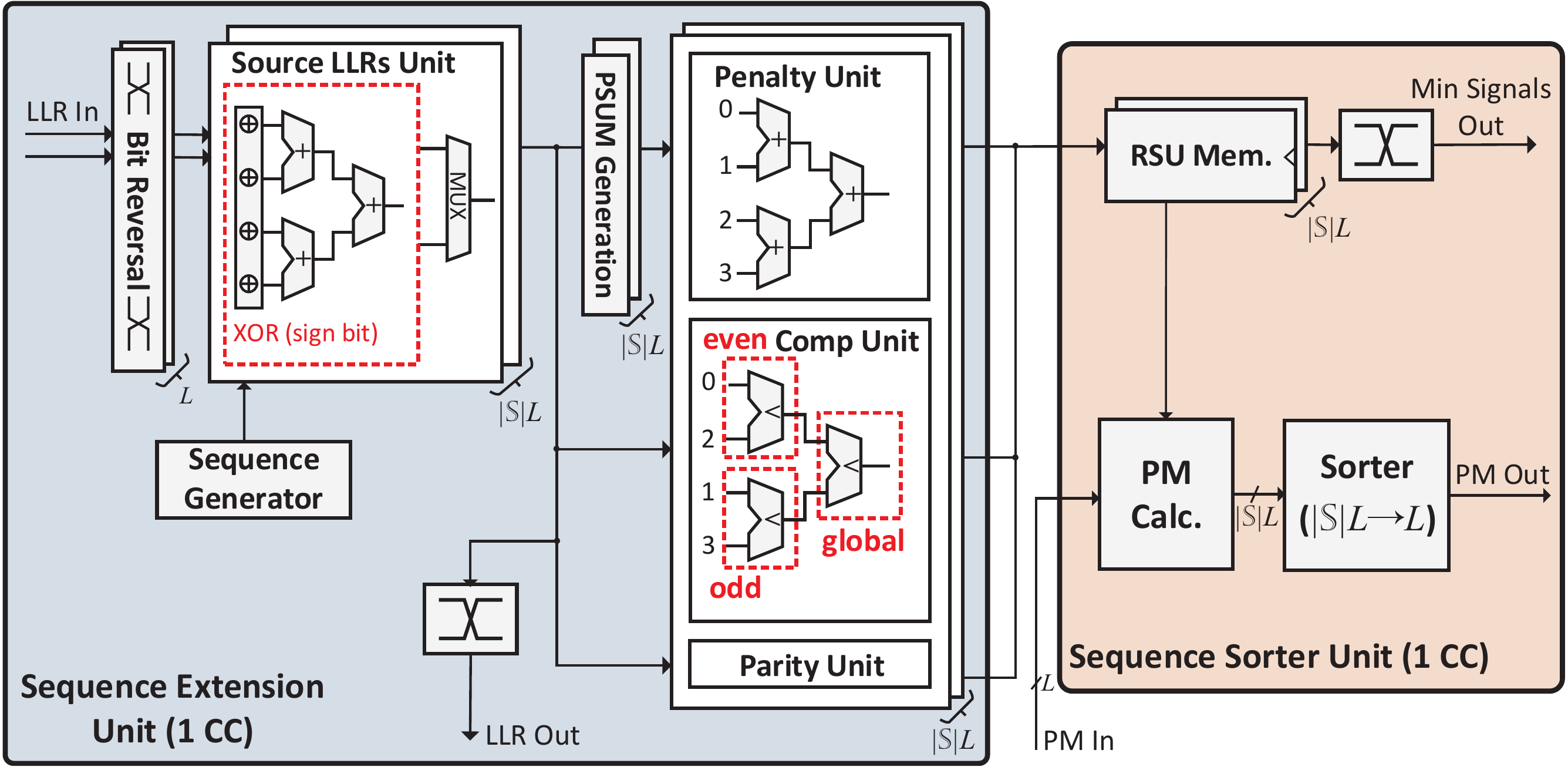}
    \caption{Repetition sequence unit (RSU) for $|\bbS|_{\max}=4$ for part 1 of the SR-List decoding in Section~\ref{sec:list_dec_r0_rep}.}
    \label{fig:sr_seq_decoder}
\end{figure}

The RSU architecture is shown in Fig.~\ref{fig:sr_seq_decoder}.
To implement~\eqref{eq:llr_source_node_SCL}, the LLR values are first put in bit-reversed order to route the LLRs correctly for different SR node sizes.
Then the repetition sequences are XOR-ed with the LLR sign-bits and the resulting LLRs are added together to generate the source node LLRs.
For adding the LLRs,  a single adder tree of depth $\log_{2}(|\bbS|_{\mathrm{max}})$ where all internal results and the final result are routed to a MUX.
As $s$ and $r$ vary, the full adder tree is not always used, and the MUX selects which internal results from the adder tree are used based on $s$ and $r$.

After calculating the source node LLRs, the penalty unit updates the PMs according to~\eqref{eq:PM_source_node_SCL} with the PSUMs from~\eqref{eq:psum_sr_est_SCL}, the comp unit finds the indices of the minimum magnitude LLRs per~\eqref{eq:gpc_index_SCL}, and the G-PC group parities are calculated based on~\eqref{eq:gpc_parity_SCL} in the parity unit.
These three steps are all done in parallel.
As the G-PC node with the constraint $N_{p_{\mathrm{max}}}=2$ is either an R1 node, an SPC node, or a TYPE-III node, we re-order the source node LLR vectors into even- and odd-indexed.
The minimum of each group and the global minimum of the node can be obtained simultaneously by using compare-and-select (CAS) trees in the comp unit.

The RSU operations are done in two CCs, with the previously described procedures implemented in the sequence extension unit (SEU) in the first CC.
In the second CC, we calculate the ML PMs according to~\eqref{eq:PM_SRI_SCL} and sort the $|\bbS|L$ PMs in the sequence sorter unit (SSU) to get the $L$ smallest ones.
The sorter architecture is further explained in Section~\ref{sec:hw_rank_sort}.

\begin{figure}[t]
    \centering
    \includegraphics[width=\figwidth\columnwidth]{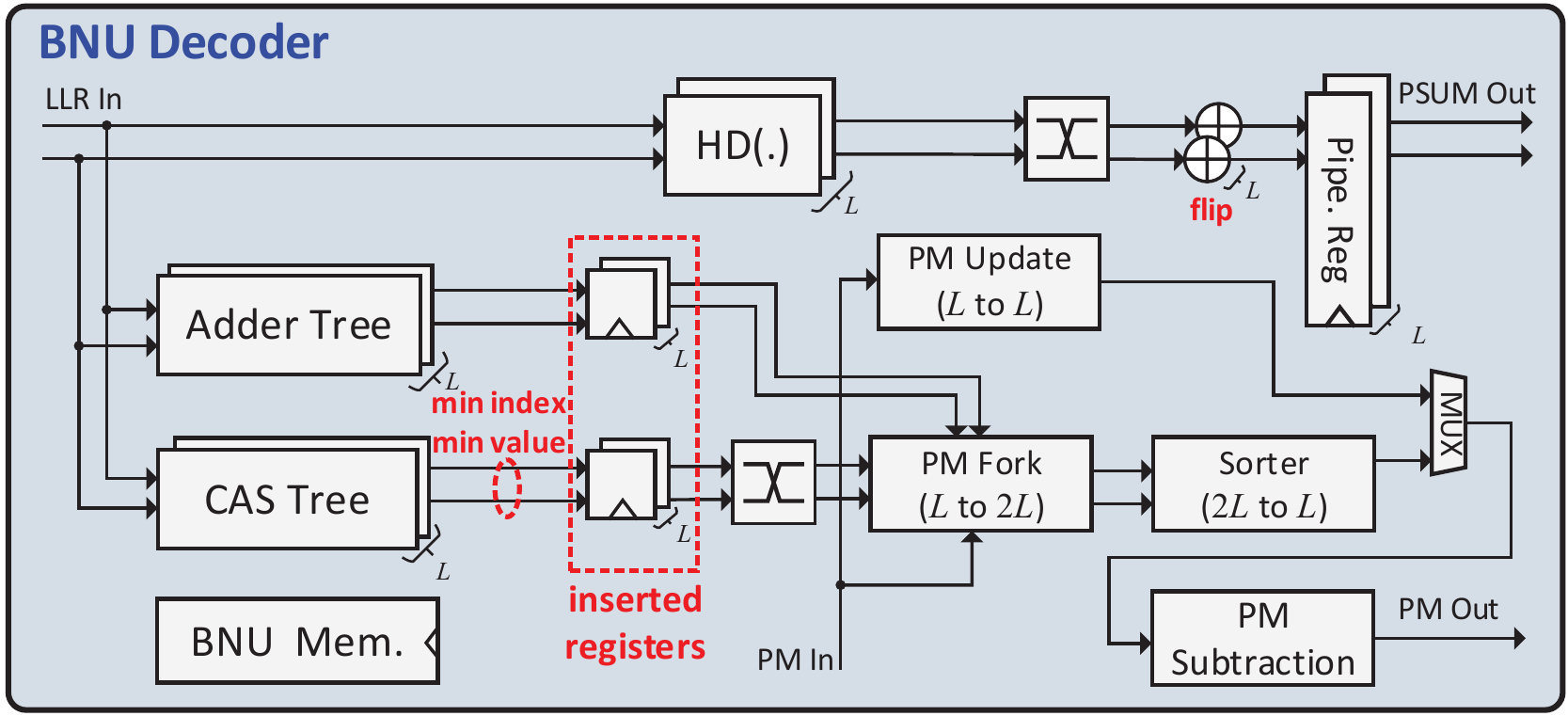}
    \caption{Basic node unit (BNU) for processing nodes such as G-PC for the SR\=/II part of SR-List decoding in Section~\ref{sec:list_dec_source_node} or directly decoding a node when $s=r$ for the SR node.}
    \label{fig:sr_npu_wrapper}
\end{figure}

\subsection{Basic Node Unit (BNU) of the NPU}\label{sec:hw_bnu}

The BNU, shown in Fig.~\ref{fig:sr_npu_wrapper}, is used for directly decoding R0 nodes, REP nodes, and G-PC nodes when $s=r$ or when these are decoded in part SR\=/II of the SR-List decoding when $s > r$.
For the low-rate nodes, i.e., R0 and REP, an adder tree calculates the PM increment for the all-zero or all-one PSUM vector from the R0 and REP nodes.
For the high rate nodes, i.e., the G-PC nodes, we instantiate a CAS tree, a PM fork unit, and a sorter for path forking.
Similar to~\cite{Hashemi2017WCNC}, the full sorter for the modified LLRs from~\eqref{eq:llr_gpc_modified_SCL} can be substituted with identifying the minimum $|\bm{\llr}_r^l[j^{l}]|$ in~\eqref{eq:PM_gpc_SCL} as only a single value is needed in every CC.
That is, the CAS tree only has to identify the $j^{l}$\=/th minimum magnitude LLR value for the $l$\=/th list in the $j$\=/th CC.
Note that for SPC and TYPE-III nodes, when $s=r$, the BNU requires an extra CC to perform Wagner decoding to ensure that the G-PC nodes satisfy the even-parity constraint, otherwise, the BNU directly executes the path forking as done for R1 nodes.
To improve the maximum clock frequency of the BNU, we insert registers after the adder and the CAS trees, which increases the latency of the BNU by one CC for path forking the G-PC nodes.
Furthermore, we subtract the minimum PM value from the final PMs as in~\cite{liu20185} at the end of the BNU to decrease the PM bit-width.

\begin{table}[t]
    \captionsetup{font=small, justification=centering}
    \centering
    \caption{\MakeUppercase{NPU decoding time for each node type depending on the list size $L$, the node size} $N_s$, \MakeUppercase{and the information bit sizes} $K_s$ \MakeUppercase{and} $K_r$.}
    \label{tab:npu_cycle_cost}

    \small
    \tabcolsep 6mm

    \begin{tabular}{ll}
      \toprule
      \multicolumn{1}{l}{Node} & \multicolumn{1}{l}{Clock cycles (CCs)} \\
      \midrule
      R0            & 1 \\
      REP           & 2 \\
      R1            & $\min(L-1,N_{s})+\redT{\bm{1}}$ \\
      SPC           & $\greenT{\bm{1}}+\min(L-1,K_{s})+\redT{\bm{1}}$ \\
      \pf{TYPE-III} & $\greenT{\bm{1}}+\min(L-1,K_{s})+\redT{\bm{1}}$ \\
      SR(R1)        & $\blueT{\bm{2}}+\min(L-1,N_{r})$ \\
      SR(SPC)       & $\blueT{\bm{2}}+\min(L-1,K_{r})+\redT{\bm{1}}$ \\
      SR(TYPE-III)  & $\blueT{\bm{2}}+\min(L-1,K_{r})+\redT{\bm{1}}$ \\
      \bottomrule
    \end{tabular}
    \begin{tablenotes}
        \footnotesize
        \item[*] $\redT{\bm{1}}$ is for the inserted registers in Fig.~\ref{fig:sr_npu_wrapper} for the BNU.
        \item[*] $\greenT{\bm{1}}$ is for the Wagner decoding for SPC and TYPE-III.
        \item[*] $\blueT{\bm{2}}$ is the fixed latency of the two parts of the RSU module in Fig.~\ref{fig:sr_seq_decoder}.
    \end{tablenotes}
\end{table}

The NPU latency for the different node types is summarized in Table~\ref{tab:npu_cycle_cost}.
Note that when decoding the SR nodes, since the indices of the minimum magnitude LLRs in~\eqref{eq:gpc_index_SCL} and the ML PMs of~\eqref{eq:PM_SRI_SCL} are calculated in the RSU, one CC for SR(R1) and one Wagner decoding CC for SR(SPC) and SR(TYPE-III) can be removed.
A more detailed decoding latency analysis is given in Section~\ref{sec:impl_results_latency_analysis}.

\subsection{PSUM Unit (PSU)}\label{sec:hw_psu}

The PSU, shown in the upper right corner of Fig.~\ref{fig:architecture_SCL}, performs the PSUM operations and it contains the PSUM memory.
When the NPU completes the processing of a node, the PSU is activated in the next CC, as shown in Fig.~\ref{fig:SRL_timing}.
The PSU then either directly updates the PSUM memory with the node output when decoding a left descendant node or performs the combine-operation in~\eqref{eq:psum_func} and then updates the PSUM memory for a right descendant node.
It takes one CC to do the combine-operation starting from any stage up to the maximum stage, which is stage 8 for DL and stage 9 for UL with 5G NR polar codes.
After a node is decoded, the PSUM memory directly copies the PSUM values based on the parent paths of the surviving paths.

\subsection{Partial Rank-Order Sorter}\label{sec:hw_rank_sort}

The critical path of an SCL decoder is generally through the PM sorting unit~\cite{Alex2015LLRbased}, and several works have thus focused on optimizing the PM sorting, \pf{either using pruned sorting~\cite{Alex2015ISCAS, Yong2016TCASII, Gal2020SiPS} or partial sorting~\cite{Fan2016JSAC, Liang2016Globcom, Chui2018hroushold}}, with a comparison of various implementations in~\cite{Gal2020SiPS}.
\pf{Most of these current sorters focus on the $2L\rightarrow L$ sorting required for the result in~\eqref{eq:PM_SCL_func}, including the SOA pruned rank-order sorter in~\cite{Gal2020SiPS}, which prunes the sorter by using the property that the result of comparing certain PMs is known ahead of time.}
However, this property of the PMs depends on how the path forking is done in hardware and requires additional routing and control.
\pf{As the SR-I part requires sorting $|\mathbb{S}|L$ PMs from~\eqref{eq:PM_SRI_SCL}, the sorting size for the SR-List decoder is much larger compared to previous works, and all of the $|\mathbb{S}|L$ PMs have different PM penalties so there is no explicit order known ahead of time.}
Therefore, in our decoder, we consider more general cases and assume no special properties on the input PMs.

\pf{To avoid a large complexity by fully sorting all input PMs,} we simplify the sorter by allowing the \pf{output} PMs to be partially sorted, as SCL decoding only requires the $L$ smallest PMs \pf{without these being sorted}.
\pf{Herein, a full sorter can be replaced by two small full sorters and one layer of comparators, which results in our proposed partial rank-order sorter almost halving the number of comparators compared to a full sorter.}

Let $X$-to-$Y$ denote a sorter with $X$ inputs and $Y$ outputs.
Our partial rank-order sorter, shown in Fig.~\ref{fig:rank_partial_sort_block} for $8$-to-$4$, is composed of two $\frac{X}{2}$-to-$Y$ full rank-order sorters which return at least $Y$ fully sorted results in ascending and descending order, respectively, as shown in~\eqref{eq:DRS_m} and~\eqref{eq:DRS_n}
\begin{alignat}{3}
   &\text{Full rank (ascending):} \quad&&m_{0} \leq m_{1} \leq \cdots \leq m_{Y-1}, \label{eq:DRS_m} \\
   &\text{Full rank (descending):} \quad&&n_{0} \geq n_{1} \geq \cdots \geq n_{Y-1}. \label{eq:DRS_n}
\end{alignat}
After the full rank sorters, we use a set of comparators to find the $Y$ smallest inputs from the pairs $\{m_{i},n_{i}\},i\in[0,Y-1]$.
\begin{Proposition}
For an $X$-to-$Y$ partial rank sorter, all results belong to the $Y$ smallest elements from the $X$ inputs.
\end{Proposition}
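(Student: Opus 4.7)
The plan is to partition the $Y$ globally smallest inputs according to which half they come from. Let $a$ denote the number of globally smallest inputs that lie in the first half, and set $b := Y-a$. Since the $X/2$-to-$Y$ full rank-order sorter on the first half outputs the $Y$ smallest values of that half in ascending order, the $a$ smallest values of the first half are exactly $m_0 \le m_1 \le \cdots \le m_{a-1}$. Similarly, because the second sorter outputs the $Y$ smallest of its half in descending order, the $b$ smallest values of the second half are $n_{Y-1} \le n_{Y-2} \le \cdots \le n_{a}$. Hence the set of $Y$ globally smallest inputs is exactly
\begin{equation*}
\{m_0, m_1, \ldots, m_{a-1}\} \,\cup\, \{n_a, n_{a+1}, \ldots, n_{Y-1}\}.
\end{equation*}

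Next I would analyze the output pairs $(m_i, n_i)$ for $0 \le i \le Y-1$ case by case on whether $i < a$ or $i \ge a$. For $i < a$, the value $m_i$ belongs to the $Y$ globally smallest by the decomposition above, while $n_i$ is the $(Y-i)$-th smallest of the second half; since $Y-i > Y-a = b$, $n_i$ does not lie among the $b$ smallest of the second half, and therefore $n_i$ is not among the $Y$ globally smallest. This forces $m_i \le n_i$, so that $\min(m_i, n_i) = m_i$, which lies in the $Y$ globally smallest. The case $i \ge a$ is entirely symmetric: $n_i \in \{n_a, \ldots, n_{Y-1}\}$ is globally smallest while $m_i$ (the $(i+1)$-th smallest of the first half with $i+1 > a$) is not, giving $\min(m_i, n_i) = n_i$. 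Iterating over $i$ proves the proposition and in fact shows the stronger statement that the output multiset coincides with the $Y$ globally smallest inputs.

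The only real subtlety concerns ties. When multiple inputs share the same value, the phrase ``the $Y$ smallest elements'' becomes a multiset that depends on a tie-breaking convention. I would resolve this by lexicographically tie-breaking on the original input index, which the hardware already implicitly does through the fixed routing of pairs $(m_i, n_i)$; under this convention the inequalities in the case analysis become non-strict and the argument goes through unchanged. The main potential obstacle is simply verifying that both halves really do supply $Y$ sorted candidates, which requires $X/2 \ge Y$; this is an implicit architectural assumption of the two-sorter decomposition and is trivially satisfied in the intended application where $X = |\bbS|L$ and $Y = L$.
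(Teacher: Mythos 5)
Your proof is correct, but it takes a genuinely different route from the paper's. The paper gives a local counting argument: $m_i$ is guaranteed to be less than or equal to $\frac{X}{2}-i-1$ other first-half values and $n_i$ to $\frac{X}{2}-Y+i$ other second-half values, so whichever of the pair the comparator selects is less than or equal to $(\frac{X}{2}-i-1)+(\frac{X}{2}-Y+i)+1 = X-Y$ other inputs in total; at most $Y-1$ inputs can then be smaller, so each output is among the $Y$ smallest, with no need to identify which inputs those actually are. You instead argue globally: you partition the $Y$ smallest inputs into the $a$ coming from the first half and the $b=Y-a$ from the second, observe that the opposed orderings of the two sub-sorters place exactly one ``winner'' in every pair $(m_i,n_i)$ ($m_i$ for $i<a$, $n_i$ for $i\ge a$), and conclude that the comparator layer selects precisely these. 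Your version is longer and needs the case split plus an explicit tie-breaking convention, but it buys a strictly stronger conclusion --- the output multiset \emph{equals} the set of $Y$ smallest inputs --- whereas the proposition and the paper's proof only assert membership (completeness follows from the paper's argument only after the extra observation that the $Y$ outputs occupy distinct input positions). Both arguments rely on the same implicit architectural assumption $\frac{X}{2}\ge Y$, which you correctly flag and which holds for all sorter sizes used in the decoder.
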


\begin{proof}
Any partial rank-order sorter output ${out}_{i}$, for $i\in[0,Y-1]$, is chosen by comparing $m_{i}$ and $n_{i}$, where $m_{i}$~is less than or equal to $\frac{X}{2}-i-1$ other $m$ values and $n_{i}$ is less than or equal to $\frac{X}{2}-Y+i$ other $n$ values.
Therefore, for every comparison between the two inputs $m_{i}$ and $n_{i}$, it holds that ${out}_{i}$ must be less than or equal to $(\frac{X}{2}-i-1) + (\frac{X}{2}-Y+i) + 1 = X-Y$ other input values and ${out}_{i}$ is therefore always amongst the $X-Y$ smallest inputs.
\end{proof}

Compared to the $\frac{1}{2}(X^2 - X)$ comparators in the full rank-order sorter, the partial rank sorter only has $(\frac{1}{4}X^2 - \frac{1}{2}X) + Y$ comparators, which simplifies to $\frac{X^2}{4}$ when $Y = \frac{X}{2}$.
From the synthesis results, shown in Table~\ref{tab:synth_distr_sort}, the proposed sorter has a \SI{45}{\percent} smaller area than the full sorter for $64$-to-$8$.

\begin{table}[t]
	\captionsetup{font=small,justification=centering}
	\centering
	\caption{\MakeUppercase{Synthesis results for different rank sorters in \SI{28}{\nano\meter} FD-SOI with a \SI{0.5}{\nano\second} target.}}
	\label{tab:synth_distr_sort}

  \small
	\tabcolsep 0.7mm
	\def\CmidW{0.08cm}

	\begin{tabular}{lcrrrcrrr}
		\toprule
		Type & ~ & \mc{3}{c}{\textbf{Partial Rank Sorter}} & ~ & \mc{3}{c}{Full Rank Sorter~\cite{Gal2020SiPS}$^{\dagger}$} \\
		\cmidrule(r{\CmidW}){1-1} \cmidrule(l{\CmidW}r{\CmidW}){2-5} \cmidrule(l{\CmidW}){6-9}
		Size                                & ~ & $16$-to-$8$ & $32$-to-$8$ & $64$-to-$8$  & ~ & $16$-to-$8$ & $32$-to-$8$ & $64$-to-$8$ \\
		\# Comp.                            & ~ & $64$        & $248$       & $1000$       & ~ & $120$       & $496$       & $2016$      \\
		Area (\SI{}{\micro\meter\squared})  & ~ & $2482$      & $9553$      & $31616$      & ~ & $3959$      & $14051$     & $57453$     \\
		\bottomrule
	\end{tabular}
	\begin{tablenotes}
		\footnotesize
		\item[*] $\dagger$ The full rank sorter from~\cite{Gal2020SiPS} was re-implemented and synthesized in \SI{28}{\nano\meter} FD-SOI.
	\end{tablenotes}
\end{table}

\begin{figure}[t]
	\centering
	\includegraphics[width=0.55\columnwidth]{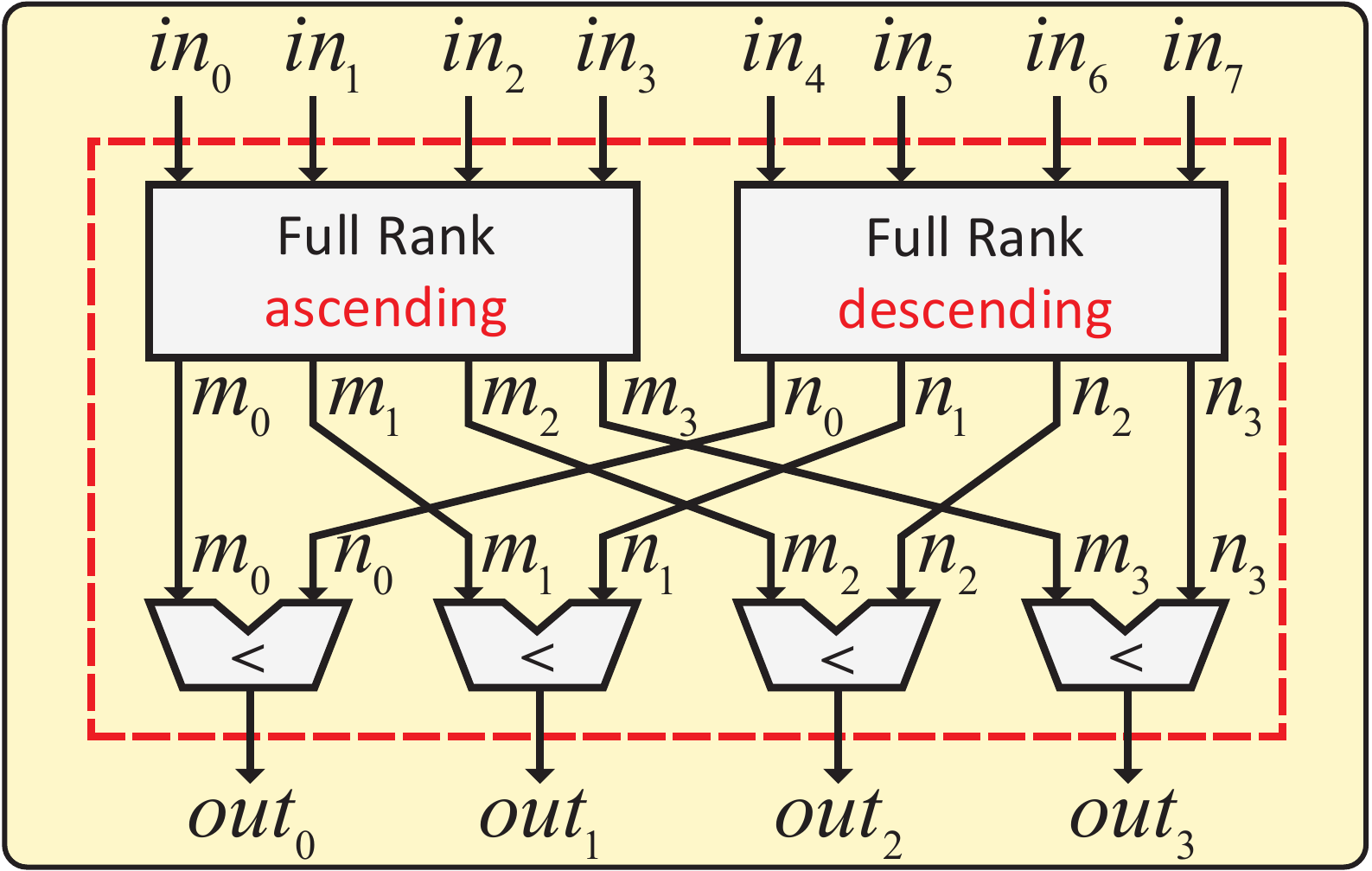}
	\caption{Architecture of the $8$-to-$4$ partial rank-order sorter.}
	\label{fig:rank_partial_sort_block}
\end{figure}

\subsection{Controller}\label{sec:hw_ctrl}

Finally, the controller, shown in the lower right corner of Fig.~\ref{fig:architecture_SCL}, is responsible for orchestrating the decoding process, by activating the different main components of the decoder, i.e., the SCU, NPU, PSU, and LLR memories.
The controller is given a list of instructions which can be generated offline or using a hardware scheme similar to the one in~\cite{hashemi2019rate}.
These instructions provide the controller with information about the current operation, e.g., the size and type of a special node for the NPU to decode.
When the controller activates a component like the NPU, the NPU sub-controller uses the high-level instruction for the actual node decoding to control the RSU, the BNU, and the path forking process.
Note that once the $g$-function has been executed at the root node of the polar decoding tree, the channel LLR memory can be overwritten with the channel LLRs of a new codeword while the controller continues to decode the current codeword to reduce the time between decoding multiple codewords.

\begin{figure}[t]
  \centering
  \begin{subfigure}[t]{\columnwidth}
    \centering
    \includegraphics[width=\figwidth\columnwidth]{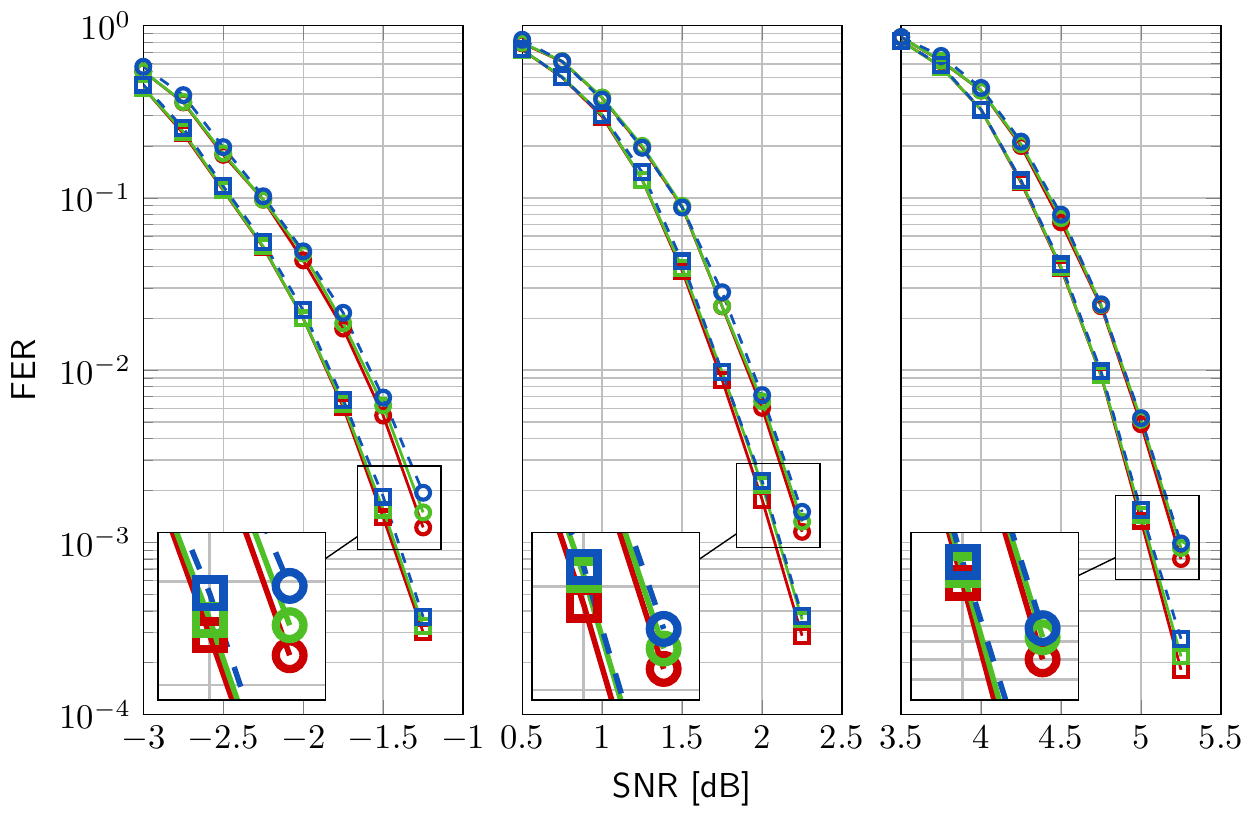}
    \caption{\pf{UL-$(1024,256)$, UL-$(1024,512)$, and UL-$(1024,768)$}}
    \label{fig:SR_UL_FER_quan}
  \end{subfigure}
  \vspace*{0.25cm}

  \begin{subfigure}[t]{\columnwidth}
    \centering
    \includegraphics[width=\figwidth\columnwidth]{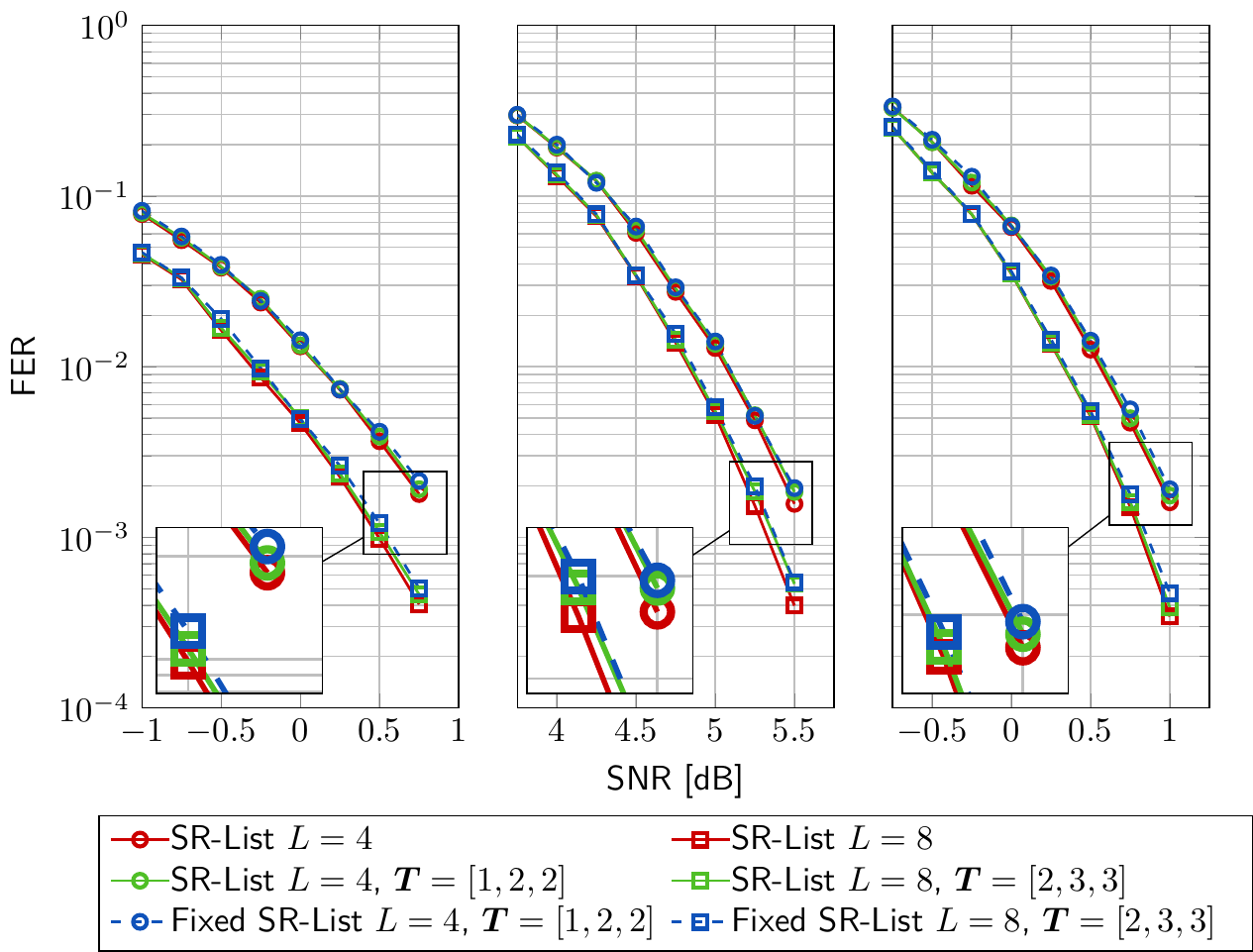}
    \caption{\pf{DL-$(108,12)$, DL-$(216,140)$, and DL-$(432,140)$}}
    \label{fig:SR_DL_FER_quan}
  \end{subfigure}
  \caption{\pf{FER comparison between floating-point and fixed-point of the proposed SR-List decoder for UL and DL polar codes with $L \in \{4,8\}$. We use $\bm{T} = [T_\mathrm{R1}, T_\mathrm{SPC}, T_\mathrm{TYPE-III}]$ to indicate the number of path forks for each node type.}}
  \label{fig:SR_FER_quan}
\end{figure}

\section{Implementation Results}\label{sec:impl_results}

In this section, we present the synthesis results for our SR-List decoder for 5G NR polar codes.
All synthesis results are based on a STM \SI{28}{\nano\meter} FD-SOI technology in the slow-slow corner and timing constraints that are not achievable to better determine the maximum achievable operating frequency for each design.
To simplify the design space, we fix the default number of SCU stages ($\#$SCU) to $2$, the number of PEs at the first \pf{stage} of the SCU ($\#$PE) to $64$, and $N_{s_\mathrm{max}}=32$.
\pf{Since the maximum polar code length for the 5G scenarios is only $N=1024$, there is a limit to the benefit of additional SCU stages given that the largest reductions occur at the higher stages and the LLR memory rows used by the NPU cannot be pruned.}
Moreover, we use the constraints discussed in Section~\ref{sec:opt_sr_list_dec} (i.e., $|\bbS|_{\max}\in\{2,4,8\}$ and G-PC nodes with $N_{p_{\max}}=2$).
Note that we present two versions of our decoder, one tailored to DL with maximum code length $512$ bits and one for UL with a maximum code length of $1024$ bits.

\subsection{Quantized FER Performance}\label{sec:impl_results_quant_fer}

In Fig.~\ref{fig:SR_FER_quan}, we show the FER performance of the decoder using floating-point and fixed-point (sign-magnitude) for \pf{UL and DL codes with $L\in\{4,8\}$}.
Let $Q_{q_{i}.q_{f}}$ denote a fixed-point number with one sign-bit, $q_{i}-q_{f}-1$ integer bits, and $q_{f}$ fractional bits.
We use $Q_{6.2}$ for the LLRs and $Q_{7.0}$ for the PMs as the quantized FER results in Fig.~\ref{fig:SR_FER_quan}.
Numerical results show that this representation has a negligible FER performance loss compared to floating point.

\begin{table}[t]
    \captionsetup{font=small, justification=centering}
    \centering
    \caption{\MakeUppercase{Implementation results for 512-bit DL polar codes.}}
    \label{tab:hw_dl_results}

    \small
    \tabcolsep 0.7mm
    \def\CmidW{0.08cm}

    \resizebox{\columnwidth}{!}{\begin{tabular}{@{}l rrr rrr rrr@{}}
        \toprule
        ~ & \mc{3}{c}{$L=2$} & \mc{3}{c}{$L=4$} & \mc{3}{c}{$L=8$}   \\
        $|\mathbb{S}|_{\max}$ & \mc{1}{c}{$2$} & \mc{1}{c}{$\bm{4}$} & \mc{1}{c}{$8$} & \mc{1}{c}{$2$} & \mc{1}{c}{$\bm{4}$} & \mc{1}{c}{$8$} & \mc{1}{c}{$2$} & \mc{1}{c}{$\bm{4}$} & \mc{1}{c}{$8$} \\
        \cmidrule(l{\CmidW}){2-4} \cmidrule(l{\CmidW}r{\CmidW}){5-7} \cmidrule(l{\CmidW}){8-10}
        CCs                                 & $155$   & $\bm{140}$    & {$125$}   & $170$   & $\bm{155}$    & {$140$}   & $188$   & $\bm{173}$   & $158$   \\
        Latency [\SI{}{\micro\second}]      & $0.109$  & $\bm{0.101}$   & {$0.112$}  & $0.130$  & $\bm{0.119}$   & {$0.130$}  & $0.185$  & $\bm{0.174}$  & $0.167$  \\
        Area [\SI{}{\milli\meter\squared}]  & $0.099$ & $\bm{0.109}$  & {$0.135$} & $0.195$ & $\bm{0.211}$  & {$0.270$} & $0.405$ & $\bm{0.439}$ & $0.565$ \\
        Freq. [\SI{}{\mega\hertz}]          & $1418$  & $\bm{1385}$   & {$1120$}  & $1312$  & $\bm{1302}$   & {$1081$}  & $1016$  & $\bm{994}$   & $944$   \\
        T/P [\SI{}{Gbps}]                   & $4.685$ & $\bm{5.065}$  & {$4.587$} & $3.952$ & $\bm{4.301}$  & {$3.954$} & $2.768$ & $\bm{2.942}$ & $3.060$ \\
        \cmidrule(l{\CmidW}){2-4} \cmidrule(l{\CmidW}r{\CmidW}){5-7} \cmidrule(l{\CmidW}){8-10}
        \parbox{1.8cm}{\raggedright Area Eff.\\ {[\SI{}{Gbps\per\milli\meter\squared}]} } & $47.09$ & $\bm{46.68}$ & {$33.99$} & $20.31$ & $\bm{20.43}$ & {$14.62$} & $6.834$ & $\bm{6.704}$ & $5.420$ \\
        \bottomrule
    \end{tabular}}
    \begin{tablenotes}
        \footnotesize
        \item[*] While our decoder is compatible with all DL codes, the results shown in this table are for the worst-case latency for all DL codes.
    \end{tablenotes}
\end{table}

\subsection{Design-Space Exploration}\label{sec:impl_results_design_exploration}

While we already fixed some parts of the decoder architecture by setting $\#\text{SCU}=2$, $\#\text{PE}=64$, $N_{s_\mathrm{max}}=32$ and restricting the G-PC node to $N_p \in \{0, 1, 2\}$ in Section~\ref{sec:opt_sr_list_dec}, we still have to determine $|\mathbb{S}|_{\max}$.
To determine an optimal $|\mathbb{S}|_{\max}$, we perform a design-space exploration with $|\mathbb{S}|_{\max} \in\{2, 4, 8\}$ as defined in Section~\ref{sec:opt_sr_list_dec} for DL polar codes with $L\in\{2, 4, 8\}$.
The corresponding results are given in Table~\ref{tab:hw_dl_results}.
Note that our decoder is compatible with all PDCCH codes and only the worst-case hardware latency and throughput results amongst all PDCCH codes are shown in Table~\ref{tab:hw_dl_results}.

For the area in Table~\ref{tab:hw_dl_results}, we note that increasing $|\bbS|_{\max}$ from 4 to 8 results in an increase in the area of \SI{23.9}{\percent}, \SI{28.0}{\percent}, and \SI{28.7}{\percent} for list sizes 2, 4, and 8, respectively.
However, while increasing $|\bbS|_{\max}$ reduces the number of decoding cycles, once the reduction in the maximum clock frequency has been accounted for, the actual decoding latency in \SI{}{\micro\second} increases slightly when $|\bbS|_{\max}>4$ for $L=2$ and $L=4$ and there is only a small \SI{4}{\percent} reduction in the latency for $L=8$.
While $|\bbS|_{\max}=2$ has the best area efficiency for $L \in \{2, 8\}$, $|\bbS|_{\max}=4$ always has a higher throughput and the difference between $|\bbS|_{\max}=2$ and $|\bbS|_{\max}=4$ in terms of area efficiency is minimal.
Therefore, for our SR-List decoder, we set $|\bbS|_{\max}=4$ as this represents the best trade-off when decoding time has a slightly higher priority than area.

\subsection{Decoding Latency Analysis}\label{sec:impl_results_latency_analysis}

To evaluate the decoding latency reduction from the various algorithms presented in Section~\ref{sec:proposed_algos}, we calculate the number of cycles to decode each DL polar code with $E=432$ and different $A$.
The corresponding worst-case latency for PDCCH with $L=8$ is shown in Fig.~\ref{fig:latency_analysis_E432_L8}.

Compared with the baseline Fast-SCL decoder~\cite{hashemi2017fastflexible}, the proposed flexible multi-stage decoding reduces the worst-case latency from \SI{334}{CCs} to \SI{311}{CCs}.
Using our proposed optimized SR-List decoding algorithm further reduces the worst-case latency to \SI{240}{CCs}, showing that the SR node provides a significant reduction in decoding latency.
Moreover, using empirical path forking and rate-matching adaptation, our final SR-List decoder with all optimizations reduces the worst-case latency to \SI{173}{CCs} which is a \SI{48.2}{\percent} reduction in the worst-case latency compared to the Fast-SCL decoder.

\begin{figure}[t]
	\centering
	\includegraphics[width=\figwidth\columnwidth]{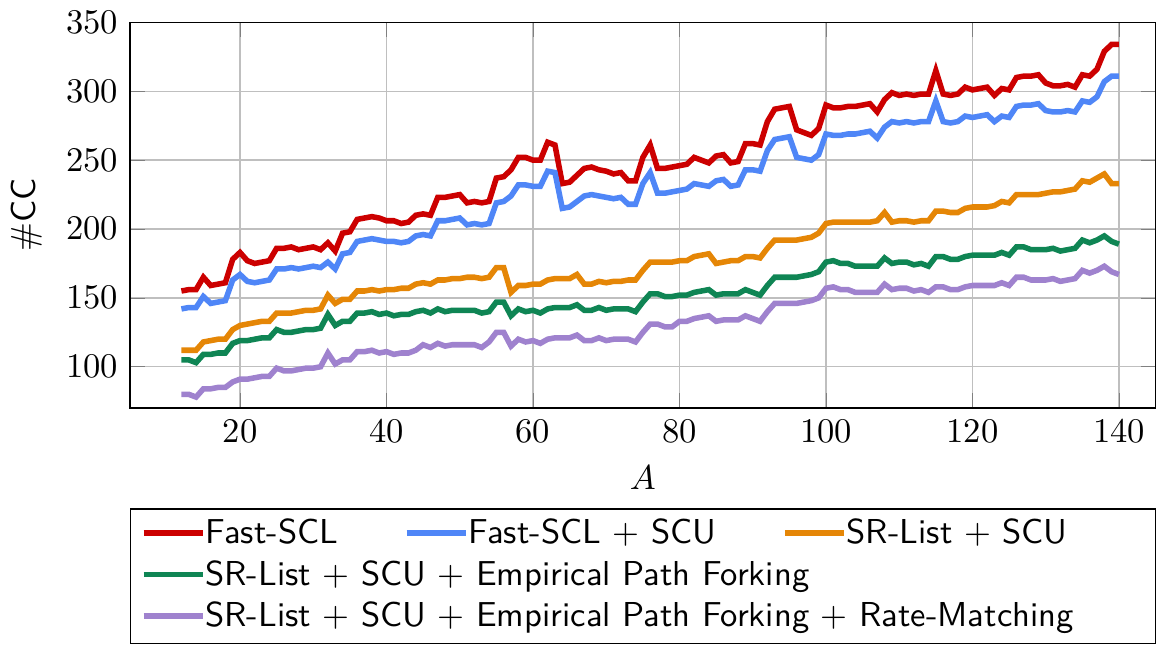}
	\caption{Latency analysis of the proposed SR-List decoder with $|\bbS|_{\max}=4$ for DL polar codes with $E=432$ and $L=8$.}
	\label{fig:latency_analysis_E432_L8}
\end{figure}

\subsection{Comparison With Previous Works}\label{sec:impl_results_comparison}

In Table~\ref{tab:hw_comparison}, we present our final implementations results for our proposed SR-List decoder for UL and compare them with the SOA architectures in~\cite{Alex2015LLRbased, giard2017polarbear, hashemi2017fastflexible, Kim2018TSP, liu20185, Lee2020TSP, tao2020configurable}.
For the comparison, we provide the results for the UL-$(1024,512)$ polar codes.
As these previous works are implemented in different technologies, we scale their results to \SI{28}{\nano\meter}.
Note that we only consider the worst-case latency for~\cite{Kim2018TSP, Lee2020TSP}.
While they include some dynamic optimization strategies, we assume that a decoder in a 5G NR modem needs to provide reliable fixed latency guarantees.
We note that in length-$1024$ UL polar codes, length-$32$ SR nodes with $s > r$ are seldom found for middle and high-rate codes,
Therefore, the RSU size is reduced to $16$ in Table~\ref{tab:hw_comparison} to improve the hardware results.

For $L=2$, our decoder has a throughput of \SI{4.457}{Gbps} and an area efficiency of \SI{29335}{Mbps\per\milli\meter\squared}, which is a $1.97\times$ improvement in area efficiency compared to~\cite{tao2020configurable} with a small \SI{4}{\percent} reduction in throughput.
For $L=4$, compared with \pf{the SCL architecture with an early stopping criterion}~\cite{Kim2018TSP}, our proposed SR-List decoder has a \SI{12.3}{\percent} improvement in throughput, but our area efficiency is \SI{33.6}{\percent} lower.
For $L=8$, the scaled throughput our decoder is \SI{2.532}{Gbps} which is $3.1\times$ higher than~\cite{liu20185} and our area efficiency is \SI{4165}{Mbps\per\milli\meter\squared} which is a $10.6\%$ improvement over~\cite{hashemi2017fastflexible}.

\begin{table*}[]
    \centering
    \captionsetup{font=small, justification=centering}
    \caption{\uppercase{Comparisons with SOA decoders for $N=1024$ and worst-case latency calculated from the UL-$(1024,512)$ polar code.}}
    \label{tab:hw_comparison}

    \tabcolsep 0.8mm

    \def\PboxW{1.5cm}
    \def\CmidW{0.08cm}
    \begin{tabular}{l ccc c c c c c c c}
        \toprule
        {~} & \mc{3}{c}{\textbf{Our work for UL$^\dagger$}} & {\parbox{\PboxW}{\centering TSP 15 \\ \cite{Alex2015LLRbased}}} & \parbox{\PboxW}{\centering JETCAS 17 \\ \cite{giard2017polarbear}} & \parbox{\PboxW}{\centering TSP 17 \\ \cite{hashemi2017fastflexible}} &  \parbox{\PboxW}{\centering TSP 18 \\ \cite{Kim2018TSP}$^*$} &  \parbox{\PboxW}{\centering ISWCS 18 \cite{liu20185}} & \parbox{\PboxW}{\centering TSP 20 \\ \cite{Lee2020TSP}$^*$} &  \parbox{\PboxW}{\centering JSSC 20 \\ \cite{tao2020configurable}} \\
\cmidrule(l{\CmidW}r{\CmidW}){2-4} \cmidrule(l{\CmidW}r{\CmidW}){5-5} \cmidrule(l{\CmidW}r{\CmidW}){6-6} \cmidrule(l{\CmidW}r{\CmidW}){7-7} \cmidrule(l{\CmidW}r{\CmidW}){8-8} \cmidrule(l{\CmidW}r{\CmidW}){9-9} \cmidrule(l{\CmidW}r{\CmidW}){10-10} \cmidrule(l{\CmidW}){11-11}

Algorithm                           & \mc{3}{c}{SR-List} & \mc{1}{c}{SCL} & \mc{1}{c}{SCL} & \mc{1}{c}{Fast-SCL} & \mc{1}{c}{Fast-SCL} & \mc{1}{c}{SCL Multi-bit} & \mc{1}{c}{Fast-SCL Flip} & \mc{1}{c}{Split-tree SCL} \\

Technology [\SI{}{\nano\meter}]     & \mc{3}{c}{ss$28$} & \mc{1}{c}{$90$}  & \mc{1}{c}{$28$} & \mc{1}{c}{$65$} & \mc{1}{c}{$65$} & \mc{1}{c}{$16$}  & \mc{1}{c}{$90$} & \mc{1}{c}{$40$} \\

List-size                           & \mc{1}{c}{$2$} & \mc{1}{c}{$4$} & \mc{1}{c}{$8$} & \mc{1}{c}{$8$} & \mc{1}{c}{$4$} & \mc{1}{c}{$8$} & \mc{1}{c}{$4$} & \mc{1}{c}{$8$} & \mc{1}{c}{$8$} & \mc{1}{c}{$2$} \\

    CCs                             & \hphantom{00}{$325$}      & \hphantom{00}{$355$}      & \hphantom{0.}{$395$}  & \hphantom{.}{$2662$}  & \hphantom{.}{$2408$}  & \hphantom{0.}{$618$}  & \hphantom{00}{$395$}  & \hphantom{0.}{$790$}  & \hphantom{0.}{$478$}  & \hphantom{00}$136$ \\

Latency [\SI{}{\micro\second}]      & \hphantom{.0}{$0.23$}     & \hphantom{.0}{$0.28$}     & \hphantom{0}$0.40$    & \hphantom{0}{$4.18$}  & \hphantom{0}{$3.34$}  & \hphantom{0}{$0.85$}  & \hphantom{0.}{$0.64$} & \hphantom{0}{$0.72$}  & \hphantom{0}{$0.80$}  & \hphantom{0.}$0.32$  \\

Area [\SI{}{\milli\meter\squared}]  & \hphantom{.}{$0.152$}     & \hphantom{.}{$0.286$}     & {$0.608$}             & \hphantom{0}{$3.58$}  & \hphantom{0}{$0.44$}  & {$3.975$}             & \hphantom{0.}{$0.94$} & \hphantom{0}{$0.06$}  & \hphantom{0}{$4.47$}  & \hphantom{.}$0.637$  \\

Freq. [\SI{}{\mega\hertz}]          & \hphantom{0}{$1414$}      & \hphantom{0}{$1255$}      & \hphantom{.0}$977$    & \hphantom{0.}{$637$}  & \hphantom{0.}{$721$}  & \hphantom{0.}{$722$}  & \hphantom{00}{$617$}  & \hphantom{.}{$1100$}  & \hphantom{0.}{$594$}  & \hphantom{00}$430$ \\

T/P [\SI{}{Gbps}]                   & \hphantom{.}{$4.457$}     & \hphantom{.}{$3.619$}     & $2.532$               & {$0.246$}             & {$0.307$}             & {$1.198$}             & \hphantom{.}{$1.608$} & {$1.426$}             & {$1.273$}             & \hphantom{0.}$3.25$   \\ \midrule

\mc{11}{l}{Scaled to \SI{28}{\nano\meter}$^\ddagger$}  \\
T/P [\SI{}{Gbps}]                   & \hphantom{.}{$4.457$}     & \hphantom{.}{$3.619$}     & $2.532$               & {$0.791$}             & {$0.307$}             & {$2.781$}             & \hphantom{.}{$3.224$} & {$0.815$}             & {$0.032$}             & \hphantom{.}$4.643$  \\

Area Eff. [\SI{}{Mbps\per\milli\meter\squared}]
                                    & {$29335$}                 & {$12671$}                 & \hphantom{.}$4165$    & \hphantom{.}{$2282$}  & \hphantom{0.}{$692$}  & \hphantom{.}{$3770$}  & {$19090$}             & \hphantom{.}{$4435$}  & \hphantom{0}{$73.89$}  & $14869$  \\ \bottomrule
 \end{tabular}
    \begin{tablenotes}
        \footnotesize
        \item[*] $^\dagger$ Our works are implemented with $|\mathbb{S}|_{\max}=4$ and compatible with all UL codewords, where latency is for UL-$(1024,512)$.
        \item[*] $^\ddagger$ Synthesis results are scaled to \SI{28}{\nano\meter} with area $\varpropto$ $s^2$ and frequency $\varpropto$ 1/$s$.
        \item[*] $^*$ Only the worst-case latency of these dynamic implementations is considered in the scaled results.
    \end{tablenotes}
\end{table*}

\section{Conclusions}\label{sec:conclusion}

In this work, we presented a low-latency and low-complexity generalized SR node-based SCL decoder implementation optimized for 5G NR polar codes.
Moreover, we applied several optimizations to significantly reduce the decoding latency and complexity of SR-List decoding to better meet the strict 5G requirements.
Numerical results show that our final decoder using all optimizations has a \SI{48.2}{\percent} reduction in decoding latency compared to the SOA decoder.
While some optimizations are specific to the SR node, many of our optimizations and analysis can be used with other types of polar decoders.
Additionally, our approach of simplifying the decoding process of the generalized SR node may also find use in other decoders for different types of generalized nodes.
We implemented the decoder and synthesis results based on a STM \SI{28}{\nano\meter} FD-SOI technology show that the proposed decoder for DL with maximum code length $512$ and $L=2$ yields a throughput of \SI{5.065}{Gbps} and an area efficiency of \SI{47.09}{Gbps\per\milli\meter\squared}.
For length-$1024$ UL polar codes, the SR-List decoder with $L=8$ achieves an area efficiency of \SI{4.165}{Gbps\per\milli\meter\squared} and a \SI{2.532}{Gbps} throughput which exceeds the SOA decoders.

\section*{Acknowledgments}
\pf{This work is supported by HiSilicon, Huawei Technologies Corporation.
The authors would like to thank the anonymous reviewers for their helpful comments.}

\section*{Dedication}

\pf{{This paper is dedicated to the memory of Prof. Alexander Vardy (deceased March 2022), who contributed significantly to the development of polar codes and several other areas of coding theory.}}

\bibliographystyle{IEEEtran}
\bibliography{IEEEabrv, bibliography}

\newpage

\appendix

\setcounter{table}{0}
\renewcommand{\thetable}{\Alph{section}\arabic{table}}

\subsection{Symbol and Function Definitions}\label{sec:symbol_def}

\begin{table}[ht]
  \captionsetup{font=small, justification=centering}
  \centering
  \caption{\pf{\MakeUppercase{Symbol and function definitions.}}}
  \label{tab:symbol_def}

  \small
  \renewcommand{\arraystretch}{1.1}

  \resizebox{!}{!}{\begin{tabular}{m{1.75cm} m{6.25cm}}
    \toprule
    \multicolumn{1}{l}{\textbf{Symbol}} & \multicolumn{1}{l}{\textbf{Definition}} \\

    \bottomrule \toprule

    $A$                     & Number of message bits. \\ \hline
    $E$                     & Codeword length after rate-matching. \\ \hline
    $G$                     & Encoded block length.  \\ \hline
    $K$                     & Number of information bits with CRC bits. \\ \hline
    $L$                     & List size. \\ \hline
    $N$                     & Mother polar code length. \\ \hline
    $R$                     & Code-rate $R$ with $R=A/E$. \\ \hline
    $P$                     & Number of CRC bits.  \\

    \bottomrule \toprule

    $\bbA$                  & Information bit set indices. \\ \hline
    $\bbF$                  & Frozen  bit set indices. \\

    \bottomrule \toprule

    $f(x,y)$                & $f$-function, defined in~\eqref{eq:fg_func_defA}. \\ \hline
    $g(x,y,z)$              & $g$-function, defined in~\eqref{eq:fg_func_defB}. \\ \hline

    $\HD(x)$                & Hard decision function $\HD(x) := 1_{x < 0}$. \\

    \bottomrule \toprule

    $\calN_{s,i}$           & $i$\=/th node in stage $s$ with code length $2^s = N_s$. \\ \hline

    $\PM_{s,i}^l$           & Path metric (PM) at $i$\=/th node in stage $s$ for path $l$, defined in~\eqref{eq:PM_SCL_func}. \\ \hline

    $\psum_{s,i}^{l}$       & Partial sum (PSUM) vector at $i$\=/th node in stage $s$ for path $l$, defined in~\eqref{eq:psum_func}. $\widetilde{\psum}$ denotes the hard decision PSUM on the SR node's input LLR and $\widehat{\psum}$ the maximum likelihood (ML) PSUM. \\ \hline

    $\gamma_q$              & Parity of the $q$\=/th group of source node partial sums (PSUMs) as defined in~\eqref{eq:gpc_parity_SCL}. \\ \hline

    $\Delta_{r,j}^l$        & Path metric (PM) penalty for path $l$ at node $\calN_{r,j}$ defined in~\eqref{eq:Delta_PM_gpc_SCL}. \\ \hline

    $\Delta^{l,k}$          & Path metric (PM) penalty from the parity constraint of the source node (G-PC node) for path $l$ with repetition sequence $k$ defined in~\eqref{eq:PM_SRI_SCL}. \\ \hline

    $\epsilon_q $           & Index of the minimum magnitude LLR in the $q$\=/th group of source node LLRs, defined in~\eqref{eq:gpc_index_SCL}. \\ \hline

    ${\bm{\zeta}}_{r}^{l}$  & Modified LLRs for each sub-group with increased or decreased magnitude when parity constraint is satisfied or not, defined in~\eqref{eq:llr_gpc_modified_SCL}. \\ \hline

    $\llr_{s,i}^{l}$        & LLR vector at $i$\=/th node in stage $s$ for path $l$, defined in~\eqref{eq:llr_fg_func}. $\llr^k$ is the LLR to the source node of an SR node for the $k$\=/th repetition sequence. \\

    \bottomrule \toprule

    $\mathrm{SR}(\bv, \mathrm{SNT}, r)$   & Sequence repetition (SR) node parametrization. \\ \hline

    $r$                     & Stage of the source node, $\calN_r$. \\ \hline

    $\bbS$                  & $\bbS = \{\bbS^0, \bbS^1, \dots, \bbS^{2^{W_v}-1}\}$ is the repetition sequence set, defined in \eqref{eq:seq_rep_def}. \\ \hline

    $\mathrm{SNT}$          & \textbf{S}pecial \textbf{N}ode \textbf{T}ype of the source node.  \\ \hline

    $T_\mathrm{SNT}$        & Upper threshold on number of path forks for a source node of type $\mathrm{SNT}$, defined in~\eqref{eq:empirical_path}. \\ \hline

    $\bv$                   & Binary vector indicating the R0 and REP node distribution from $\calN_s$ to $\calN_r$. $\bv[t] = 0$ indicates an R0 node and $\bv[t] = 1$ an REP node. \\ \hline

    $W_v$                   & Sum of $\bv$ (number of REP nodes in the SR node). Equal to cardinality of $\bbS$. \\ \hline

    $\bm{\eta}$             & Last value of each R0/REP node, that is, $\bm{\eta}[t]=0$ when $\bv[t]=0$ and $\bm{\eta}[t]\in\{0,1\}$ when $\bv[t]=1$. \\

    \bottomrule
  \end{tabular}}
\end{table}

\end{document}